\newtheorem{remark}{Remark}
\begin{document}

\title{Joint Data Compression and Caching:\\ Approaching Optimality with Guarantees}

\author{Jian Li$^*$}
\affiliation{
  \institution{College of Information and\\ Computer Sciences \\University of Massachusetts }
  \state{Amherst, MA 01003, USA} 
}
\email{jianli@cs.umass.edu}

\author{Faheem Zafari}
\authornote{Co-primary authors with equal contribution}
\affiliation{
  \institution{Department of Electrical and Electronic Engineering, \\Imperial College London, }
  \state{London SW72AZ, U.K.} 
}
\email{faheem16@imperial.ac.uk}

\author{Don Towsley}
\affiliation{
  \institution{College of Information and\\ Computer Sciences \\University of Massachusetts }
  \state{Amherst, MA 01003, USA} 
}
\email{towsley@cs.umass.edu}

\author{Kin K. Leung}
\affiliation{
  \institution{Department of Electrical and Electronic Engineering, }
  \state{Imperial College London,\\London SW72AZ, U.K.} 
}
\email{kin.leung@imperial.ac.uk}

\author{Ananthram Swami}
\affiliation{
  \institution{U.S. Army Research Laboratory}
  \state{Adelphi, MD 20783 USA} 
}
\email{ananthram.swami.civ@mail.mil}

\begin{abstract}
We consider the problem of optimally compressing and caching data across a communication network.  Given the data generated at edge nodes and a routing path, our goal is to determine the optimal data compression ratios and caching decisions across the network in order to minimize average latency, which can be shown  to be equivalent to maximizing \emph{the compression and caching gain} under an energy consumption constraint.  We show that this problem is NP-hard in general and the hardness is caused by the caching decision subproblem, while the compression sub-problem is polynomial-time solvable.  We then propose an approximation algorithm that achieves a $(1-1/e)$-approximation solution to the optimum in strongly polynomial time.  We show that our proposed algorithm achieve the  near-optimal performance in synthetic-based evaluations.  In this paper, we consider a tree-structured network as an illustrative example, but our results  easily extend to  general network topology at the expense of more complicated notations.

\end{abstract}
\maketitle

\section{Introduction}\label{sec:intro}

In recent years, with the ever increasing prevalence of edge computing enabled mobile devices and applications, such as social media, weather reports, emails notifications, etc., the demand for data communication has significantly increased.   As bandwidth and power supply associated with mobile devices are limited, efficient data communication is critical.  

In this paper, we consider a network of nodes, each capable of compressing data to a certain degree and caching a constant amount of data.  A certain set of nodes {generates} real time data and a sink node collects them from these nodes through a fixed path to serve requests for these data.  
{However, the requests need not reach nodes that generated the data, i.e. request forwarding stops upon reaching a node on the path that caches the requested data.}
 Upon finding the data, it is sent  along the reverse path to the sink node to serve the requests.

While each node can cache data to serve future requests so as to reduce access latency and bandwidth requirement, it incurs additional caching costs \cite{choi12}. Furthermore, data compression reduces the transmission cost at the expense of computation cost \cite{barr06,nazemi16}.  Thus, there is an energy consumption tradeoff among data compression, transmission and caching to reduce latency.  Since bandwidth and energy required for network operation is expensive \cite{nazemi16},  it is critical to efficiently compress, transmit and cache the data to reduce the latency. This introduces the following question, what is the right balance between compression and caching to minimize the total communication latency for a limited energy consumption?

Our primary goal is to minimize the average network latency (delay) due to data transfer across the network, subject to an energy consumption constraint on compression and caching of the data.  This problem is naturally abstracted and motivated by many real world applications, including wireless sensor networks (WSNs) \cite{choi12}, peer-to-peer networks \cite{cohen02}, content distribution networks (CDNs) \cite{borst10, dehghan15,jiansri17}, Information Centric Networks (ICNs) \cite{jacobson09} and so on.  For example, in a hierarchical WSN, the sensors generate data, which can be compressed and forwarded to  the sink node through fixed paths to serve requests generated from outside the network.  These requests can be served from the intermediate nodes along the path that cache the data; if, however, data is not cached on any node along the path, the request can subsequently be forwarded to the edge sensor that generates the requested data.  Similarly, in the ICN, requests for data can be served locally from  intermediate caches placed between the server and origin. Both applications can be mapped into the problem we consider here. 

For these and many other applications, it is natural to assume that  edge nodes in the network generate data which is then compressed and transmitted by all the nodes along the path to the sink node. The sink node receives and serves requests generated outside the communication  network. The intermediate nodes along the path can cache  data to serve requests. However, compression, transmission and caching consume energy, while the {node} power supply is usually limited.  To address this challenge, our main goal is to design a lightweight and efficient  algorithm with provable performance guarantees to minimize average latency.  We make the following contributions in this work:
\begin{itemize}
\item We propose a formal mathematical framework for joint data compression and cache optimization. Specifically, we formulate the problem of  finding the optimal data compression ratios and caching locations that minimizes average delay in serving requests subject to an energy constraint. 
\item  We analyze the complexity of the problem and show that it is NP-hard in general. The hardness is caused by data allocation to the caches.
\item We propose polynomial time solvable algorithms for the formulated problem. Since the original optimization problem is NP-hard and non-convex, we first relax the constraints and show that the relaxed problem can be  transformed into an equivalent convex optimization problem that can be solved in polynomial time.  We further show that combining this solution with greedy caching allocation achieves a solution with $1/2$-approximation to the optimum.  Moreover, we construct a polynomial-time ($1-1/e$) approximation algorithm for the problem. 
\item We conduct extensive simulations using  synthetic based network topologies and compare our proposed algorithm with benchmark techniques.  Our results show that the proposed algorithm achieves the near-optimal performance, and significantly outperforms benchmarks Genetic Algorithm \cite{deb02}, Bonmin \cite{bonami08}, and NOMAD \cite{le11} by obtaining a feasible solution in a shorter time for various network topologies.
\end{itemize}

The rest of the paper is organized as follows: We discuss the related work in Section~\ref{sec:related} and present our mathematical formulation in Section~\ref{sec:model}.  Our main results are presented in Section~\ref{sec:alg}. Numerical evaluation of our algorithms against benchmarks are given in Section~{\ref{sec:numerical}} and finally we conclude the paper in Section~\ref{sec:con}.

\section{Related Work}\label{sec:related}

Optimizing energy consumption has been widely studied in the literature with a primary focus on clustering \cite{ye05}, routing \cite{manjeshwar01} and MAC protocols \cite{heinzelman00}.  With the proliferation of smart sensors \cite{nazemi16},  in-network data processing, such as data aggregation, has been widely used as a mean of reducing system energy cost by lowering the data volume for transmission.  {Yu et al.} \cite{yu08} proposed an efficient algorithm for data compression in a tree structured networks. {Nazemi et al. } \cite{nazemi16} further presented a distributed algorithm to obtain the optimal compression ratio at each node in a tree structured network so as to minimize the overall energy consumption.

However, none of these works considered caching costs.  As caches have been widely deployed in many modern data communication networks, they can be used as a mean to enhance the system performance by making data available close to end users, which in turn reduces the communication costs \cite{choi12} and latency.

A number of authors have studied the optimization issues for caching allocation \cite{jiansri17,shanmugam13,applegate16,baev08,borst10,ioannidis16,nitishjian17,nitishjianfaheem17,nitishjiantech18}. Ioannidis, Li and Shanmugam et. al \cite{ioannidis16,shanmugam13,jian18} showed that it is NP-hard to determine the optimal data caching location, and an ($1-1/e$) approximation algorithm can be obtained through the pipage rounding algorithm \cite{ageev04,calinescu07}.  Beyond cache placement, \cite{dehghan15} and \cite{ioannidis17} have jointly optimized routing and caching under a single hop bipartite graph and general graph, respectively.  However, none of the existing work considered data compression and the corresponding costs for caching and compression.

{Among all these work, the recent paper by Zafari et al. \cite{faheemjian17} is closer to the problem we tackle here.  {The differences between our work and \cite{faheemjian17} are mainly from two perspectives.}   First,  the mathematical formulations (objectives) are quite different.  Zafari et al. \cite{faheemjian17} considered the energy tradeoffs among communication, compression and caching in communication network, while we focus on maximizing the overall compression and caching gain by characterizing the tradeoff between compression and caching costs with an overall energy consumption constraint.  This difference requires different techniques to handle the problem.   Second, the methodologies are different.  \cite{faheemjian17} aimed to provide a solution to 
the non-convex mixed integer programming problem (MINLP) with an $\epsilon$-global\footnote{{The value of $\epsilon$ depends on the requirement of different problems. Usually it is very small such as $0.0001$, i.e., the obtained solution and global optimal one differ by $0.0001$.}} optimality guarantee.  Since MINLP is NP-hard in general, the proposed algorithm V-SBB in \cite{faheemjian17} is complex and slow to converge to an $\epsilon$-global optimal solution. Furthermore, it is difficult to be generalized to larger network topologies as the algorithm relies on symbolically reformulating the original non-convex MINLP problem that results in extra constraints and variables.  
 Instead, in this paper, we focus on developing an approximation algorithm to optimizing the gain defined above.  In doing so, we first allow the caching decision variables to be continuous, approximate the objective function and then convert the {problem} into a convex one.  Finally, we propose a master-slave based algorithm to efficiently solve the approximated {relaxed} problem, and show that the rounded solutions are feasible to the original problem with performance guarantee.  Compared to the algorithm in \cite{faheemjian17}, our algorithm is more lightweight and efficient and can be applied to a larger problem size. }

Note that we focus on minimizing the latency and ignore throughput issues, since we do not model congestion.  Combing these two issues together and propos{ing} efficient approximation algorithms is an  interesting problem, which is out of the scope of this paper.

\section{Model}\label{sec:model}

We represent the network as a directed graph $G=(V, E).$ For simplicity, we consider a tree, with $N=|V|$ nodes, as shown in Figure~\ref{fig:1}. Node $v \in V$ is capable of storing $S_v$ amount of data.  Let $\mathcal{K}\subseteq V$ with $K=|\mathcal{K}|$ be the set of leaf nodes, i.e., $\mathcal{K}=\{1, 2, \cdots, K\}$.  Time is partitioned into periods of equal length $T>0$ and data generated in each period are independent.  Without loss of generality (w.l.o.g.), we consider one particular period in the remainder of the paper.}  We assume that only leaf nodes $k\in\mathcal{K}$ can generate data, and all other nodes in the tree receive and compress data from their children nodes, and transmit and/or cache the compressed data to their parent nodes during time $T$.  In Section~\ref{sec:assumption-relax}, we discuss how these assumptions can be relaxed.  For ease of exposition,  the parameters used throughout this paper are summarized in Table~\ref{tab:notations}.

\begin{figure}
\centering
\includegraphics[width=0.45\textwidth]{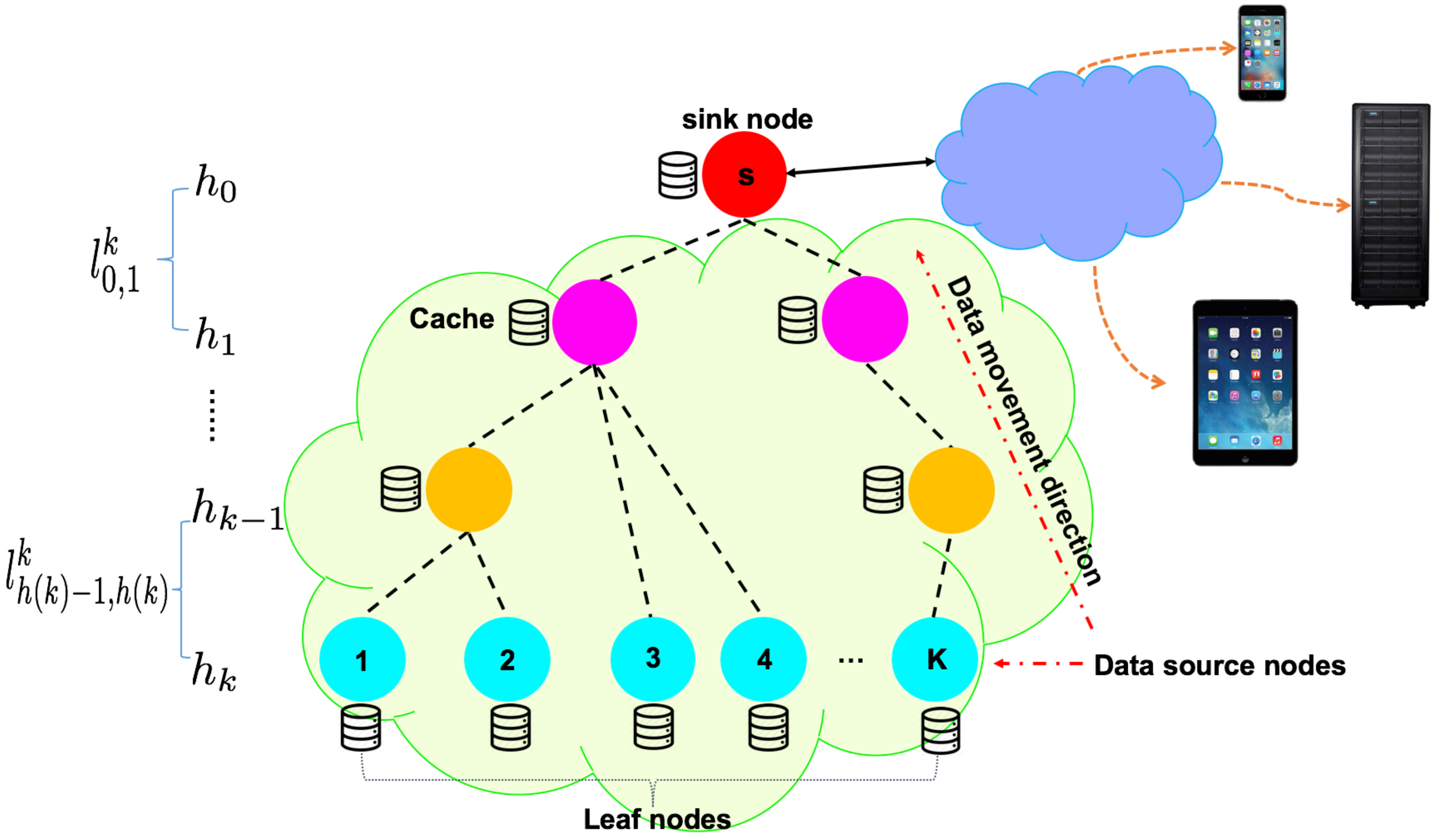}
\caption{Tree-Structured Network Model.}
\label{fig:1}
\vspace{-0.2in}
\end{figure}

\begin{table}
\caption{Summary of Notations}
\vspace{-0.1in}
\begin{tabular}{l|p{6cm}}
\hline
\textbf{Notation} & \textbf{Description} \\ \hline
 $G(V, E)$ & Network graph with $|V|=N$ nodes\\
  $\mathcal{K}$	& Set of leaf nodes with $|\mathcal{K}|=K$  \\ 
  $S_v$ & Cache capacity at node $v\in V$\\
  $h_i^k$ & The $i$-th node on the path between leaf node $k$ and sink node\\ 
  $\delta_{k,i}$ & Compression ratio for data generated by leaf node $k$ at node $i\in V$\\
  $l_{ij}$ & Latency of edge $(i, j)\in E$\\
  $\varepsilon_{vR}$ & per-bit reception cost of node $v\in V$   \\ 
  $\varepsilon_{vT}$ & per-bit transmission cost of node $v\in V$   \\ 
  $\varepsilon_{vC}$& per-bit compression cost of node $v\in V$ \\ 
  $y_{k}$ 	&  Number of data (bits) generated at node $k\in\mathcal{K}$ \\ 
  $b_{k, v}$	&Variable indicating whether node $v\in V$ caches the data from leaf node $k\in\mathcal{K}$ \\ 
   $w_{ca}$ & Caching power efficiency  \\  
   $R_k$ & Request rate for data from node $k\in\mathcal{K}$\\ 
    $W$	& Energy constraint\\ 
   $T$	& Time duration that data are cached\\ 
   {$\delta_{v}$}	& {Reduction rate at node $v$}\\ 
   {$C_{v}$}	& {Set of leaf nodes that are children of node $v$}\\
    {\text{s.t.}}	& {Subject to}\\ \hline
\end{tabular}
\label{tab:notations}
\vspace{-0.15in}
\end{table}

Our objective is to determine the optimal data compression ratio and caching locations across the network to minimize average latency under an energy constraint. 

\subsection{Compression and Caching Costs}
 Let $y_k$ be the amount of data generated by leaf node $k\in\mathcal{K}$. The data generated at the leaf nodes are transmitted up the tree to the sink node $s,$ which serves requests for the data generated in the network.  Let $h(k)$ be the depth of node $k$ in the tree. W.l.o.g., we assume that the sink node is located at level $h(s)=0.$  We represent a path from node $k$  to the sink node as the unique path $\mathcal{H}^k$ of length $h(k)$ as a sequence {$\{h_0^k, h_1^k, \cdots, h_{h(k)}^k\}$} of nodes $h_j^k\in V$ such that $(h_j^k, h_{j+1}^k)\in E,$ where $h_0^k\triangleq s$ (i.e., the sink node) and $h_{h(k)}^k\triangleq k$ (i.e., the node itself).

We denote the per-bit reception, transmission and compression cost of node $v \in V$ as $\varepsilon_{vR}, \varepsilon_{vT}$, and $\varepsilon_{vC},$ respectively.  Each node $h_i^k$ along the path $\mathcal{H}^k$ compresses the data generated by leaf node $k$ at a \emph{data reduction rate\footnote{defined as the ratio of the volume of the output data to the volume of input data at any node}} $\delta_{k,i}$, where $0<\delta_{k, i}\leq 1,$ $\forall i, k.$  
The higher the value of $\delta_{k, i}$, the lower the compression will be, and vice versa. The higher the degree of data compression, the larger will be the amount of energy consumed by compression (computation).  Similarly, caching  data  closer to the sink node can reduce the transmission cost for serving the request, however, each node only has a finite storage capacity. We study the tradeoff among the energy consumed at each node for transmitting, compressing and caching  data to minimize the average delay (which will be defined in \eqref{eq:exact-latency}) in serving a request.  

We consider an energy-proportional model \cite{choi12} for caching, i.e., $w_{ca}\delta_vy_{v}T$ units of energy is consumed if the received data $y_v$ is cached for a duration of $T$  where $w_{ca}$ represents the power efficiency of caching,  which strongly depends on the storage hardware technology. $w_{ca}$ is assumed to be identical for all the nodes.

Data produced by every leaf node $k$ is received, transmitted, and possibly compressed by all nodes in the path from  the leaf  node $k$ to the root node. 
 On the first request, the energy consumed for this processing of the data from leaf node $k$ is 
\begin{align}
E^{\text{C}}_k= \sum_{i=0}^{h(k)}y_kf(\delta_{k, i})\prod_{m=i+1}^{h(k)}\delta_{k,m},
\label{eq:servingcost}
\end{align}
where $\prod_{m=i}^{j} \delta_{k,m} := 1$ if $i \ge j$ 
 and $f(\delta_v)= \varepsilon_{vR}+\varepsilon_{vT}\delta_{v}+\varepsilon_{vC}l_v(\delta_{v})$ is the sum of per-bit reception, transmission and compression cost at node $v$ per unit time. {We take $l_v(\delta_{v})=1/\delta_{v}-1$ which was used in \cite{nazemi16,faheemjian17} to capture the  compression cost. }

Let $E_k^{\text{R}}$ be the total energy consumed in responding to the subsequent $(R_k-1)$ requests for the data originally generated by leaf node k.  We have
 \begin{align}
E^{\text{R}}_k&= \sum_{i=0}^{h(k)}y_{k}(R_k-1)\Bigg\{f(\delta_{k, i})\prod_{m=i+1}^{h(k)}\delta_{k,m}\bigg(1-\sum_{j=0}^{i-1}b_{k,j}\bigg)\nonumber\displaybreak[0]\\
&\qquad\qquad\qquad+ \bigg(\prod_{m=i}^{h(k)}\delta_{k,m}\bigg)b_{k,i}{\left(\frac{w_{ca}T}{R_k-1}+\varepsilon_{kT}\right)}\Bigg \},
\label{eq:cachretrievecost}
\end{align}
where $b_{k, j}=1$ if node $j$ caches data generated by $k,$ otherwise $b_{k, j}=0.$  The first term captures the energy cost for reception, transmission and compression up the tree from node $v_{k,i-1}$ to $v_{k, 0}$ and the second term captures the energy cost for storage and transmission by node $v_{k,i}$.  A detailed explanation of equations~(\ref{eq:servingcost}) and~(\ref{eq:cachretrievecost}) is provided in Appendix~\ref{sec:app}.

To consider data generated by all leaf nodes, the total energy consumed in the network is 

 \begin{align}\label{eq:totalenergy}
 &E^{\text{total}}(\boldsymbol{\delta},\boldsymbol{b})\triangleq {\sum_{k \in \mathcal{K}}}\bigg(E^{\text{C}}_k+E^{\text{R}}_k\bigg)\nonumber\\
 &=\sum_{k \in \mathcal{K}}\sum_{i=0}^{h(k)}y_kR_kf(\delta_{k, i})\prod_{m=i+1}^{h(k)}\delta_{k,m}-\sum_{k \in \mathcal{K}}\sum_{i=0}^{h(k)}y_{k}(R_k-1)\nonumber\displaybreak[0]\\
 &\qquad\qquad\cdot f(\delta_{k, i})\prod_{m=i+1}^{h(k)}\delta_{k,m}\sum_{j=0}^{i-1}b_{k,j}+\sum_{k \in \mathcal{K}}\sum_{i=0}^{h(k)}y_{k}(R_k-1)\nonumber\displaybreak[1]\\
 &\qquad\qquad\qquad\qquad\qquad\qquad\cdot\bigg(\prod_{m=i}^{h(k)}\delta_{k,m}\bigg)b_{k,i}{\left(\frac{w_{ca}T}{R_k-1}+\varepsilon_{kT}\right)}\nonumber\displaybreak[2]\\
 &=\sum_{k\in\mathcal{K}}\sum_{i=0}^{h(k)}y_k \Bigg\{R_kf(\delta_{k, i})\prod_{m=i+1}^{h(k)}\delta_{k,m}+\bigg(\prod_{m=i}^{h(k)}\delta_{k,m}\bigg)b_{k, i}(w_{ca}T+\nonumber\displaybreak[3]\\
 &(R_k-1)\varepsilon_{kT})\Bigg\}-\sum_{k \in \mathcal{K}}\sum_{i=0}^{h(k)}y_{k}(R_k-1)f(\delta_{k, i})\prod_{m=i+1}^{h(k)}\delta_{k,m}\sum_{j=0}^{i-1}b_{k,j}\nonumber\\
&\leq\sum_{k\in\mathcal{K}}\sum_{i=0}^{h(k)}y_k \Bigg\{R_kf(\delta_{k, i})\prod_{m=i+1}^{h(k)}\delta_{k,m}+\bigg(\prod_{m=i}^{h(k)}\delta_{k,m}\bigg)b_{k, i}(w_{ca}T+\nonumber\\
&(R_k-1)\varepsilon_{kT})\Bigg\}\triangleq \tilde{E}^{\text{total}}(\boldsymbol{\delta},\boldsymbol{b}),
\end{align}

where $\boldsymbol{\delta}=\{\delta_{k, i}, \forall k \in \mathcal{K}, i=0,\cdots, h(k)\}$ and $\boldsymbol{b}=\{b_{k,i}, \forall k \in \mathcal{K}, i=0,\cdots, h(k)\}$. 

Note that $\tilde{E}^{\text{total}}(\boldsymbol{\delta},\boldsymbol{b})$ is an upper bound of $E^{\text{total}}(\boldsymbol{\delta},\boldsymbol{b})$, which are identical when there is no caching in the network. In the following optimization, we use $\tilde{E}^{\text{total}}(\boldsymbol{\delta},\boldsymbol{b})$ for energy constraint.

\subsection{Latency Performance}
W.l.o.g., we consider the path $\{h_0^k, h_1^k, \cdots, h_{h(k)}^k\}$.  A request for data generated by leaf node $k$ is forwarded along this path from the root node s until it  reaches the node that caches the requested data. Upon finding the requested data, it is propagated along the reverse direction of the path, i.e., carrying the requested data to the sink node where the request originated. To capture the average latency due to data transfer at any particular link, we associate each link with a cost $l_{i, j}$ for $(i, j)\in E$, representing the latency of transmitting the data across the link $(i,j)$.  Denote the latency associated with path $\{h_0^k, h_1^k, \cdots, h_{h(k)}^k\}$ as $\{l_{0,1}^k, l_{1,2}^k,\cdots, l_{h(k)-1, h(k)}^k\}$.

Then the overall latency for all the paths is 
\begin{align}\label{eq:exact-latency}
L(\boldsymbol \delta, \boldsymbol b)=\sum_{k\in\mathcal{K}}\sum_{i=0}^{h(k)-1}\prod_{m=i+1}^{h(k)}\delta_{k,m}y_kR_k l_{i, i+1}^k\prod_{j=0}^i(1-b_{k,j}).
\end{align}

\subsection{Optimization}
Our objective is to determine the optimal compression ratio $\boldsymbol{\delta}=\{\delta_{k, i}, \forall k \in \mathcal{K}, i=0,\cdots, h(k)\}$ and data caching location $\boldsymbol{b}=\{b_{k,i}, \forall k \in \mathcal{K}, i=0,\cdots, h(k)\}$ to minimize the expected total latency subject to the energy constraint. That is, 
\begin{subequations}\label{opt:min-latency}
\begin{align}
\min\quad& L(\boldsymbol \delta, \boldsymbol b)\label{eq:obj}\\
\text{s.t.}\quad &\sum_{k\in\mathcal{K}}\sum_{i=0}^{h(k)}y_k \Bigg\{R_kf(\delta_{k, i})\prod_{m=i+1}^{h(k)}\delta_{k,m}+\bigg(\prod_{m=i}^{h(k)}\delta_{k,m}\bigg)b_{k, i}\nonumber \displaybreak[0]\\
&\qquad\qquad\qquad\qquad\cdot(w_{ca}T+(R_k-1)\varepsilon_{kT})\Bigg\}\leq W, \displaybreak[1]\label{eq:constraint1}\\
&  b_{k,i}\in\{0,1\}, \forall  k\in\mathcal{K}, i=0, \cdots, h(k),  \displaybreak[2]\label{eq:constraint2}\\
&\sum_{k \in C_v} b_{k, {h(v)}} y_{k} \prod_{j=h(k)}^{h({v})}\delta_{k,j}\leq S_v, \forall \; v \in V,\displaybreak[3]\label{eq:constraint3}\\
& \sum_{i=0}^{h(k)}b_{k,i} \leq 1,\forall k\in\mathcal{K}.\label{eq:constraint4}
\end{align}
\end{subequations}

Now suppose that there is no compression or caching, then all the requests need to be served from leaf nodes.  The corresponding total latency $L^u$ is given as 
\begin{align}
L^u=\sum_{k\in\mathcal{K}}\sum_{i=0}^{h(k)-1}y_k l_{i, i+1}^kR_k.
\end{align}
Clearly, $L^u$ is an upper bound on the expected total latency. 

Then the \emph{compression and caching gain} is 
\begin{align}\label{gain}
G(\boldsymbol \delta, \boldsymbol b)&=L^u-L(\boldsymbol \delta, \boldsymbol b)\nonumber\\
&=\sum_{k\in\mathcal{K}}\sum_{i=0}^{h(k)-1}R_ky_k l_{i, i+1}^k\Bigg(1-\prod_{m=i+1}^{h(k)}\delta_{k,m}\prod_{j=0}^i(1-b_{k,j})\Bigg).
\end{align}

An \emph{optimization {problem} equivalent} to~(\ref{opt:min-latency}) is to maximize the above gain, given as follows
\begin{align}
\max\quad& G(\boldsymbol \delta, \boldsymbol b)\nonumber \\
\text{s.t.}\quad & \text{Constraints in~(\ref{opt:min-latency}).} 
\label{opt:max-gain}
\end{align}
Objective in~(\ref{opt:max-gain}) maximizes the expected compression and caching gain. Constraint~(\ref{eq:constraint1}) ensures that the total energy consumption in the network as given in~(\ref{eq:totalenergy}) is limited.  Constraint~(\ref{eq:constraint2}) indicates that our caching decision variables are binary. Constraint~(\ref{eq:constraint3}) ensures that each cache $v$ stores no more than $S_v$ amount of data. Constraint~(\ref{eq:constraint4}) ensures that at most one copy of the generated data can be cached at any node along the path between the leaf  and the sink node.  Each node potentially {compresses} data from different leaf nodes differently; the coupling occurs due to the storage and energy constraints.

\subsection{Complexity Analysis}
There are two decision variables in~(\ref{opt:max-gain}), i.e., the compression ratio and the caching decision variables. In the following, we will show the impact of these variables on the hardness of our problem, i.e., we consider two cases, (i) given the caching decisions variables $\boldsymbol b;$ (ii) given the compression ratio $\boldsymbol \delta.$

\subsubsection{Given Caching Decisions:}
For given caching decision variables $\boldsymbol b,$ the optimization problem in~(\ref{opt:max-gain}) turns into a geometric programming problem over the compression ratio $\boldsymbol \delta$ that can be solved in polynomial time. 
\begin{theorem}\label{thm:opt-compression}
Given  fixed caching decisions $\boldsymbol b,$ the optimization problem in~(\ref{opt:max-gain}) is polynomial-time solvable. 
\end{theorem}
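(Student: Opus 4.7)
The plan is to recognize the sub-problem obtained by fixing $\boldsymbol b$ as a geometric program (GP) in $\boldsymbol{\delta}$, and then to invoke the standard logarithmic change of variables that converts a GP into a convex program solvable in polynomial time by interior-point methods.

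First, fixing $\boldsymbol b$ eliminates the integer variables, so constraints~\eqref{eq:constraint2} and~\eqref{eq:constraint4} drop out. Since $L^u$ is then a constant, maximizing $G(\boldsymbol\delta,\boldsymbol b)$ is equivalent to minimizing $L(\boldsymbol\delta,\boldsymbol b)$. Each term of $L(\boldsymbol\delta,\boldsymbol b)$ has the form $R_k y_k l^k_{i,i+1}\bigl[\prod_{j=0}^i(1-b_{k,j})\bigr]\prod_{m=i+1}^{h(k)}\delta_{k,m}$, i.e., a non-negative constant multiplying a monomial in $\boldsymbol\delta$, so the objective is a posynomial.

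Next, I would expand the energy constraint~\eqref{eq:constraint1} by substituting $f(\delta_{k,i})=(\varepsilon_{vR}-\varepsilon_{vC})+\varepsilon_{vT}\delta_{k,i}+\varepsilon_{vC}/\delta_{k,i}$ with $v=h^k_i$. Multiplied by $\prod_{m=i+1}^{h(k)}\delta_{k,m}$, each resulting piece is a monomial in $\boldsymbol\delta$, and the caching piece $b_{k,i}(w_{ca}T+(R_k-1)\varepsilon_{kT})\prod_{m=i}^{h(k)}\delta_{k,m}$ is likewise a non-negative monomial once $b_{k,i}$ is fixed. Grouping all monomials produces a posynomial bounded above by $W$. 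Similarly, constraint~\eqref{eq:constraint3} reduces to $\sum_{k\in C_v,\,b_{k,h(v)}=1} y_k\prod_{j=h(k)}^{h(v)}\delta_{k,j}\le S_v$, which is again a posynomial $\le$ constant, while the natural domain $0<\delta_{k,i}\le 1$ is a simple box.

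Finally, under the change of variables $z_{k,i}=\ln\delta_{k,i}$, every posynomial becomes a log-sum-exp expression that is convex in $\boldsymbol z$, and $\delta_{k,i}\le 1$ becomes the linear inequality $z_{k,i}\le 0$. The resulting convex program is solvable to arbitrary accuracy in time polynomial in the problem size by interior-point methods (see, e.g., Nesterov--Nemirovski), which yields the claim. The main obstacle is handling the negative contribution $-\varepsilon_{vC}\prod_{m=i+1}^{h(k)}\delta_{k,m}$ arising from the constant $-\varepsilon_{vC}$ in $f$: when $\varepsilon_{vR}\ge \varepsilon_{vC}$ it is absorbed into the positive reception monomial, and otherwise one may use $\delta_{k,m}\le 1$ to bound this single monomial by the constant $\varepsilon_{vC}$ and tighten $W$ correspondingly, restoring the pure GP form with at most a controllable shrinkage of the feasible region.
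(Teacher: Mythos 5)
Your proposal follows essentially the same route as the paper: the paper's proof simply observes that fixing $\boldsymbol b$ yields a geometric program in $\boldsymbol\delta$ and defers to its Section on convex relaxation, where the identical logarithmic substitution $\tau_{k,i}=\log\delta_{k,i}$ turns all posynomial terms into convex log-sum-exp expressions solvable by interior-point methods. You are in fact slightly more careful than the paper, since you explicitly flag the potentially negative coefficient $\varepsilon_{vR}-\varepsilon_{vC}$ in the expanded energy constraint (which the paper carries along without comment), though your proposed fix of tightening $W$ yields a restriction rather than an exactly equivalent problem.
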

\begin{proof}
Once $\boldsymbol b$ is given, ~(\ref{opt:max-gain}) becomes a geometric programming problem in $\boldsymbol \delta$, we will show in Section~\ref{sec:convexity} that it can be transformed into a convex optimization problem, which can be solved in polynomial time.
\end{proof}

\subsubsection{Given Compression Ratios:}
Given compression ratios $\boldsymbol \delta$, the optimization problem in~(\ref{opt:max-gain}) is only over the caching decision variables $\boldsymbol b.$ Hence, we obtain an integer programming problem, which is NP-hard.
\begin{theorem}\label{thm:opt-caching}
Given a fixed compression ratio $\boldsymbol \delta,$ the optimization problem in~(\ref{opt:max-gain}) is NP-hard.
\end{theorem}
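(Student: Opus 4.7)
The plan is to prove NP-hardness by polynomial-time reduction from 0-1 Knapsack. The underlying intuition is that once $\boldsymbol\delta$ is fixed the objective $G(\boldsymbol b)$ and all the constraints of (\ref{opt:max-gain}) become linear in $\boldsymbol b$, so we are left with a $0/1$ integer program whose capacity-type constraint (\ref{eq:constraint3}) mirrors a knapsack capacity; the rest of the reduction only has to make all the other constraints non-binding.

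Concretely, from a 0-1 Knapsack instance $(w_1,v_1),\ldots,(w_n,v_n)$ with capacity $W_K$ and target value $V$, I would build a star-shaped tree with a sink $s$ at depth $0$ and $K=n$ leaves at depth $1$. Set $y_k = w_k$, $R_k = 1$, $l_{0,1}^k = v_k/w_k$, and fix $\delta_{k,1} = 1$ (admissible since $0 < \delta_{k,i}\le 1$). Set the sink cache capacity $S_s = W_K$ and the leaf capacities $S_k = 0$ (which forces $b_{k,1}=0$), and take the energy budget $W$ to be any value at least $\sum_k y_k\bigl[\varepsilon_{sR}+\varepsilon_{sT}+\varepsilon_{kR}+\varepsilon_{kT}+w_{ca}T\bigr]$ so that (\ref{eq:constraint1}) is slack. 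Under these choices the gain (\ref{gain}) collapses to $G(\boldsymbol b) = \sum_k v_k b_{k,0}$ and the sink-storage constraint (\ref{eq:constraint3}) collapses to $\sum_k w_k b_{k,0}\le W_K$, while (\ref{eq:constraint4}) is automatically satisfied. Hence the reduced instance of (\ref{opt:max-gain}) attains gain at least $V$ if and only if the Knapsack instance admits a packing of value at least $V$, so an oracle for (\ref{opt:max-gain}) with fixed $\boldsymbol\delta$ decides Knapsack.

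The main obstacle is the bookkeeping needed to verify that every auxiliary constraint in (\ref{opt:max-gain}) can simultaneously be made either trivial or non-binding while all constructed parameters stay of polynomial size in the Knapsack input. The reasons this works cleanly are: the star topology localizes every meaningful caching decision to the sink; $R_k=1$ kills the awkward $(R_k-1)\varepsilon_{kT}$ term in the energy and caching costs of (\ref{eq:totalenergy}); and $\delta_{k,1}=1$ zeros out the compression-cost term $\varepsilon_{vC}(1/\delta_v-1)$ inside $f(\cdot)$, leaving a purely linear integer program. Since the reduction is clearly polynomial-time, NP-hardness of (\ref{opt:max-gain}) for fixed $\boldsymbol\delta$ follows, and the same construction actually shows NP-hardness already on a star of depth one, pinpointing the cache-allocation variables as the sole source of hardness, consistent with the remark preceding the theorem.
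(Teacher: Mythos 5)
Your reduction is correct, but it takes a genuinely different route from the paper. The paper argues NP-hardness by a (rather high-level) reduction from job-shop scheduling, mapping nodes to machines and the cache/no-cache decision to a two-job set; your proof instead reduces from 0-1 Knapsack on a depth-one star, where the choices $R_k=1$, $\delta_{k,i}=1$, $y_k=w_k$, $l^k_{0,1}=v_k/w_k$ make the gain \eqref{gain} collapse to $\sum_k v_k b_{k,0}$ and the sink-storage constraint \eqref{eq:constraint3} to $\sum_k w_k b_{k,0}\le W_K$, which checks out. Your version is more concrete and verifiable (every constraint is explicitly shown to be trivial or to coincide with the knapsack capacity), it pinpoints the hardness as residing entirely in a single cache's capacity constraint on a star, and it avoids the direction-of-reduction ambiguity in the paper's argument. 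Two small caveats: first, because Knapsack is only weakly NP-hard, your construction (which encodes the weights into $y_k$ and $S_s$) establishes weak NP-hardness, leaving open a pseudo-polynomial algorithm, whereas a successful job-shop reduction would speak to strong NP-hardness — the theorem as stated only claims NP-hardness, so this is fine. Second, your parenthetical that $S_k=0$ ``forces $b_{k,1}=0$'' depends on whether the paper's set $C_k$ in \eqref{eq:constraint3} includes the leaf itself; but this is harmless, since $b_{k,1}$ contributes nothing to the gain for $i=0$ and only consumes energy and the budget in \eqref{eq:constraint4}, so any optimal solution sets it to zero and the equivalence with Knapsack survives under either reading.
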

\begin{proof}
We prove the hardness by  reduction from the classical job-shop problem which is NP-hard \cite{jain99}.

We can reduce the job-shop problem to our problem in~(\ref{opt:max-gain}) with  fixed compression ratios $\boldsymbol \delta$ as follows.  Consider each node $v \in V$ in our model to be a machine $M_i$.  Denote the set of machines as $\mathcal{M}=\{M_1, M_2 \cdots M_{|V|}\}$. The caching decision constitutes the set of jobs $\mathcal{J}=\{J_1, J_2\}$, where $J_1$ means that the data is cached and $J_2$ means otherwise.  Let $\mathcal{X}$ be the set of all sequential job assignments to different machines so that every machine performs every job only once.  The elements $x \in \mathcal{X}$ can be written as $2\times |V|$ matrices, where column $v$ order-wise lists the sequential jobs that the machine $M_v$ will perform. There is a cost function $C$ that captures the cost (i.e., latency) for any machine to perform a particular job.  Our objective in the optimization problem~(\ref{opt:max-gain}) is to find assignments of job $x\in \mathcal{X}$ to minimize the latency or maximize the gain, which is equivalent to the classical job-shop problem.  Since job-shop problem is NP-hard \cite{jain99}, our problem in~(\ref{opt:max-gain}) with given compression ratios $\boldsymbol \delta$  is also NP-hard.
\end{proof}

Therefore, given the results in Theorems~\ref{thm:opt-compression} and~\ref{thm:opt-caching}, we know that our optimization problem is NP-hard in general.
\begin{corollary}
The optimization problem defined in~(\ref{opt:max-gain}) is NP-hard.
\end{corollary}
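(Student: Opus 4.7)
The plan is to derive this corollary directly from Theorem~\ref{thm:opt-caching} using the standard principle that if a restriction of an optimization problem is NP-hard, then so is the unrestricted problem. Intuitively, allowing the algorithm to additionally optimize over $\boldsymbol{\delta}$ cannot make the problem easier than when $\boldsymbol{\delta}$ is given as part of the input. Theorem~\ref{thm:opt-compression} plays no role here; the hardness is driven entirely by the caching subproblem.

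To make the argument rigorous, I would give a polynomial-time reduction from the fixed-$\boldsymbol{\delta}$ version of~(\ref{opt:max-gain}) (NP-hard by Theorem~\ref{thm:opt-caching}) to the full problem~(\ref{opt:max-gain}). Given an instance with a prescribed $\boldsymbol{\delta}^0$, I would construct an instance of~(\ref{opt:max-gain}) whose optimal (or indeed every feasible) solution is forced to set $\boldsymbol{\delta}=\boldsymbol{\delta}^0$. A clean way to do this is to rescale the per-bit compression cost $\varepsilon_{vC}$ together with the budget $W$ so that the energy constraint~(\ref{eq:constraint1}) is feasible precisely at $\boldsymbol{\delta}=\boldsymbol{\delta}^0$ and is violated for any other compression vector. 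Then any optimal solution to the constructed instance necessarily uses $\boldsymbol{\delta}^0$, and its caching component $\boldsymbol{b}^*$ coincides with an optimal cache placement for the original fixed-$\boldsymbol{\delta}$ instance. A polynomial-time algorithm for~(\ref{opt:max-gain}) would therefore yield one for the fixed-$\boldsymbol{\delta}$ variant, contradicting Theorem~\ref{thm:opt-caching}.

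The main obstacle is the explicit construction that pins $\boldsymbol{\delta}$ to $\boldsymbol{\delta}^0$ without inflating the encoding size: since $\boldsymbol{\delta}$ is continuous while the job-shop instance underlying Theorem~\ref{thm:opt-caching} is discrete, one has to keep all newly introduced constants rational with polynomially bounded bit-length, so that the reduction stays in polynomial time. Once that routine bookkeeping is dispatched, the NP-hardness of~(\ref{opt:max-gain}) follows immediately, and the corollary is established.
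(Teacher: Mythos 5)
Your proposal takes essentially the same route as the paper: the corollary is read off from Theorem~\ref{thm:opt-caching}, since the full problem contains the fixed-$\boldsymbol \delta$ caching subproblem as a hard special case (the paper in fact offers nothing beyond the one-line appeal to Theorems~\ref{thm:opt-compression} and~\ref{thm:opt-caching}, so your attempt to make the reduction explicit is already more than the paper provides). One caution on your pinning construction: the energy constraint~(\ref{eq:constraint1}) is increasing in $\boldsymbol b$ as well as depending on $\boldsymbol \delta$, so tightening $W$ until only $\boldsymbol \delta=\boldsymbol \delta^0$ is feasible would also force $\boldsymbol b=\boldsymbol 0$ and trivialize the caching subproblem; you would instead need to zero out the caching energy coefficients (e.g.\ $w_{ca}T+(R_k-1)\varepsilon_{kT}$) or otherwise decouple $\boldsymbol b$ from the budget before pinning $\boldsymbol \delta$.
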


\subsection{Relaxation of Assumptions}\label{sec:assumption-relax}
We  made several assumptions in the above for the sake of model simplicity. In the following, let us discuss how these assumptions can be relaxed.

First, the network is assumed to be structured as a tree, however, we can easily relax this assumption {by incorporating the route into our optimization problem}. 
We take the tree structure as our motivating example since it is a simple and representative topology that captures the key parameters in the optimization formulation without introducing more complexity for a general network topology.


Second, while we only allow leaf nodes to generate data, our model can be extended to allow intermediate nodes to generate data at the cost of added complexity, {i.e., the number of decision variables will be increased to represent the caching decision and compression ratio for the data produced at the intermediate nodes}. Furthermore, rather than having a constant $R_k$ requests for data generated at the leaf node $k$, we can generalize our approach to the case where $R_k$ for various leaf nodes are drawn from a distribution such as the Zipf distribution \cite{choi12}. 

Third, in our model, we assume that the requests for the data that are generated and valid for a time period $T$ are known.  But our solutions can be applied to an online setting with predicted user requests.

\section{Approximation Algorithm}\label{sec:alg}
Since our optimization problem~(\ref{opt:max-gain}) is NP-hard,  we focus on developing efficient approximation algorithms.  In particular, we develop a polynomial-time solvable algorithm that produces compression ratios and cache decisions with a constant approximation of the minimum average latency.  In the following, we first derive several properties that allow us to develop such an approximation algorithm. Then we discuss how to obtain a constant approximation solution in polynomial time.

\subsection{Properties of the Problem Formulation}
In this section, we show that~(\ref{opt:max-gain}) is a submodular maximization problem under matroid constraints. To begin, we first review the concepts of submodular function and matroid.

\begin{definition}\label{def:submodular}
({Submodular function \cite{schrijver03}}) If $\Omega$ is a finite set, a submodular function is a set function $f$: $2^{\Omega}\rightarrow\mathbb{R}$, where $2^{\Omega}$ denotes the power set of $\Omega$, which satisfies one of the following equivalent conditions:
\begin{enumerate}
\item For every $X, Y\subseteq\Omega$ with $X\subseteq Y$ and every $x\in\Omega\setminus Y$, we have $f(X\cup\{x\})-f(X)\geq f(Y\cup\{x\})-f(Y);$
\item For every $S, T\subseteq\Omega,$ we have $f(S)+f(T)\geq f(S\cup T)+f(S\cap T);$
\item For every $X\subseteq\Omega$ and $x, y\in\Omega\setminus X,$ we have $f(X\cup\{x\})+ f(X\cup\{y\})\geq f(X\cup\{x, y\})+f(X).$
\end{enumerate}
\end{definition}

\begin{definition}\label{def:monotone}
({Monotone sub-modular function \cite{krause2014submodular}}) A sub-modular function $f$ is monotone if for every $T\subseteq S,$ we have $f(T)\geq f(S).$
\end{definition}

\begin{definition}\label{def:matroid}
({Matroid \cite{welsh10}}) A finite matroid $M$ is a pair $(E, \mathcal{I})$, where $E$ is a finite set and $\mathcal{I}$ is a family of subsets of $E$ (called the independent sets) with the following properties:
\begin{enumerate}
\item The empty set is independent, i.e., $\emptyset\in\mathcal{I};$
\item Every subset of an independent set is independent, i.e., for each $A\subset B\subset E,$ if $B\in\mathcal{I}$ then $A\in\mathcal{I};$
\item If $A$ and $B$ are two independent set of $\mathcal{I}$ and $A$ has more elements than $B$, then there exists $x\in A\setminus B$ such that $B\cup\{x\}$ is in $\mathcal{I}$.
\end{enumerate}
\end{definition}

Given the above concepts, we can easily achieve the following result
\begin{theorem}
The objective function in~(\ref{opt:max-gain}) is monotone and sub-modular, and the constraints in~(\ref{opt:max-gain}) are matroid. 
\end{theorem}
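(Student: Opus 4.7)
My plan is to treat the caching decisions as a set and the compression ratios as fixed parameters, then verify submodularity and monotonicity via the standard coverage-function decomposition, and finally identify each combinatorial constraint with an appropriate matroid.

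First I would introduce the ground set
\[
\Omega \;=\; \{(k,i) : k\in\mathcal{K},\; i=0,\ldots,h(k)\},
\]
and identify each $\boldsymbol b\in\{0,1\}^{|\Omega|}$ with the subset $X\subseteq\Omega$ of pairs $(k,i)$ for which $b_{k,i}=1$. Fixing $\boldsymbol\delta$ and letting $\alpha_{k,i}\triangleq \prod_{m=i+1}^{h(k)}\delta_{k,m}$ and $w_{k,i}\triangleq R_k y_k l_{i,i+1}^{k}\ge 0$, the gain~(\ref{gain}) rewrites as a non-negative weighted sum
\[
G(\boldsymbol\delta,X)\;=\;\sum_{k\in\mathcal{K}}\sum_{i=0}^{h(k)-1} w_{k,i}\Bigl((1-\alpha_{k,i}) + \alpha_{k,i}\,\mathbb{1}\bigl[X\cap A_{k,i}\ne\emptyset\bigr]\Bigr),
\]
where $A_{k,i}=\{(k,0),(k,1),\ldots,(k,i)\}$ uses the fact that $\prod_{j=0}^{i}(1-b_{k,j})=0$ iff at least one $(k,j)\in X$ with $j\le i$.

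Next I would invoke the standard fact that the set-cover indicator $X\mapsto\mathbb{1}[X\cap A\neq\emptyset]$ is monotone and submodular for any fixed $A\subseteq\Omega$: its marginal $\mathbb{1}[x\in A,\,X\cap A=\emptyset]$ is non-negative and non-increasing in $X$, which by the first condition of Definition~\ref{def:submodular} yields submodularity, and by Definition~\ref{def:monotone} yields monotonicity. Since $G(\boldsymbol\delta,\cdot)$ is a non-negative linear combination of such functions plus an $X$-independent constant $\sum_{k,i}w_{k,i}(1-\alpha_{k,i})$, both properties are preserved.

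For the matroid claim I would handle each combinatorial constraint separately and then take the intersection. Constraint~(\ref{eq:constraint4}) partitions $\Omega$ into blocks $\Omega_k=\{(k,i):0\le i\le h(k)\}$ with capacity one per block, so it is exactly a partition matroid. Constraint~(\ref{eq:constraint3}), under the natural assumption that each compressed block has uniform size relative to $S_v$ (so the left-hand side reduces to a count bounded by $\lfloor S_v/\text{size}\rfloor$), similarly partitions $\Omega$ by the caching node $v=h_i^k$ and gives a partition matroid with per-block capacity; more generally one can replace~(\ref{eq:constraint3}) by its matroidal upper bound. The three matroid axioms of Definition~\ref{def:matroid} (containing $\emptyset$, downward closure, and the exchange property) are immediate for any partition matroid, so their intersection gives the desired matroid structure used by the approximation algorithm in the next section.

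The main obstacle I anticipate is reconciling the weighted storage constraint~(\ref{eq:constraint3}) with a matroid: its coefficients $y_k\prod_{j=h(k)}^{h(v)}\delta_{k,j}$ depend on the continuous variable $\boldsymbol\delta$, and in full generality this is a knapsack constraint, not a matroid. I would handle this by either (i) invoking the paper's assumption of uniform post-compression block sizes so that capacities become integral counts, or (ii) replacing the knapsack by its natural partition-matroid relaxation (and absorbing the resulting slack into the $(1-1/e)$ analysis). The energy constraint~(\ref{eq:constraint1}) couples $\boldsymbol\delta$ and $\boldsymbol b$ and is not a matroid constraint; I would treat it as a separate budget that is satisfied by the companion compression subproblem of Theorem~\ref{thm:opt-compression}, so the matroid claim here refers specifically to the caching constraints~(\ref{eq:constraint2})–(\ref{eq:constraint4}).
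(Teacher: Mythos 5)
Your argument is correct and is in fact considerably more substantive than the paper's, which states only that one should ``verify that the objective function and constraints satisfy Definitions~\ref{def:submodular}, \ref{def:monotone} and~\ref{def:matroid}'' and omits all details. Your coverage-function decomposition is the right way to make the submodularity claim precise: fixing $\boldsymbol\delta$, writing $\prod_{j=0}^{i}(1-b_{k,j})=\mathbb{1}[X\cap A_{k,i}=\emptyset]$ with $A_{k,i}=\{(k,0),\dots,(k,i)\}$, and expressing $G$ as a non-negative combination of the monotone submodular indicators $X\mapsto\mathbb{1}[X\cap A_{k,i}\neq\emptyset]$ plus a constant is exactly the structure the paper is implicitly relying on (it is the same ``caching gain'' structure used in the cited works of Ioannidis and Shanmugam). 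One small point: the paper's Definition~\ref{def:monotone} is stated with the inequality reversed ($f(T)\geq f(S)$ for $T\subseteq S$, i.e., non-increasing); your proof establishes the standard non-decreasing monotonicity, which is what the $(1-1/e)$ machinery actually requires, so you are proving the intended claim rather than the literal one.

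On the matroid claim, your diagnosis is accurate and is the honest version of what the paper sweeps under ``space limitations'': constraint~(\ref{eq:constraint4}) is a genuine partition matroid, but constraint~(\ref{eq:constraint3}) has $\boldsymbol\delta$-dependent weights and is a knapsack, and constraint~(\ref{eq:constraint1}) couples $\boldsymbol b$ with $\boldsymbol\delta$ and is likewise a budget rather than a matroid. The theorem as literally stated therefore does not hold without the kind of uniform-size or relaxation assumption you propose; note that the paper itself later needs an ad hoc condition ($(w_{ca}T+(R_k-1)\varepsilon_{kT})\leq 1$ and integrality of the budget) in Section~\ref{sec:rounding} to make the rounded solution feasible, which is a tacit admission of the same issue. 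Your proposal to restrict the matroid claim to the caching constraints and handle the energy budget through the compression subproblem is a reasonable repair, and it is more than the paper offers.
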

The proof is simply to verify that the objective function and constraints in~(\ref{opt:max-gain})  satisfy Definitions~\ref{def:submodular}, ~\ref{def:monotone} and~\ref{def:matroid}. We skip the details due to space limitations.

\begin{corollary}
Since~(\ref{opt:max-gain}) is a sub-modular maximization problem under matroid constraints, a solution with $1/2$ approximation from the optimum can be constructed by a greedy algorithm\footnote{Start with caching all data at the leaf nodes, then compute the optimal compression ratio, and then iteratively add the data to caches by selecting feasible caching decisions at each step that leads to the largest increase in the compression and caching gain.}.
\end{corollary}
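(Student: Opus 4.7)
The plan is to invoke the classical Fisher--Nemhauser--Wolsey theorem (1978), which states that the greedy algorithm achieves a $1/2$-approximation for the maximization of a monotone submodular set function subject to a single matroid constraint. The preceding theorem has already verified that $G(\boldsymbol{\delta}, \boldsymbol{b})$ is monotone submodular in $\boldsymbol{b}$ and that the caching-side constraints~(\ref{eq:constraint2})--(\ref{eq:constraint4}) define a matroid on the ground set $\Omega = \{(k,i) : k \in \mathcal{K},\ 0 \leq i \leq h(k)\}$, so the corollary amounts to specializing this theorem to our setting and integrating in the compression subproblem from Theorem~\ref{thm:opt-compression}.

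First, I would formalize the greedy procedure by defining a compression-aware gain on subsets $B \subseteq \Omega$, identified with their indicator vector $\boldsymbol{b}(B)$:
\begin{equation*}
\tilde{G}(B) \;\triangleq\; \max_{\boldsymbol{\delta}} \; G\bigl(\boldsymbol{\delta},\, \boldsymbol{b}(B)\bigr) \quad \text{s.t. constraint~(\ref{eq:constraint1})}.
\end{equation*}
By Theorem~\ref{thm:opt-compression}, $\tilde{G}(B)$ can be evaluated in polynomial time via the convex reformulation to be given in Section~\ref{sec:convexity}. Initialize $B_0$ as the set that caches each leaf's own data (feasible at zero additional caching cost), and at each iteration $t \geq 1$ pick
\begin{equation*}
(k^\star, i^\star) \;\in\; \arg\max_{(k,i) \notin B_{t-1}}\, \bigl[\tilde{G}(B_{t-1} \cup \{(k,i)\}) - \tilde{G}(B_{t-1})\bigr],
\end{equation*}
restricted to those $(k,i)$ for which $B_{t-1} \cup \{(k,i)\}$ remains independent in the matroid and keeps the energy constraint feasible; terminate when no improving feasible element exists.

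Next, I would verify that $\tilde{G}$ inherits monotonicity and submodularity from $G$. Monotonicity is immediate: enlarging $B$ only adds caching options to the inner problem, so $\tilde{G}$ is nondecreasing. For submodularity, I would argue that the marginal latency reduction from caching data $k$ at node $i$ depends only on whether some ancestor of $i$ on the path $\mathcal{H}^k$ already caches this data; any added ancestor cache strictly absorbs the marginal benefit of $(k,i)$, which is exactly the diminishing-returns property. The same reasoning applies at the optimizer $\boldsymbol{\delta}^\star(B)$ because the compression ratios enter $G$ multiplicatively in a way that preserves the submodular structure along each root-to-leaf path.

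The main obstacle I anticipate is the energy constraint~(\ref{eq:constraint1}), which couples $\boldsymbol{\delta}$ and $\boldsymbol{b}$ and therefore does not fit cleanly into the matroid formalism. The proposed resolution is to absorb it entirely into the inner oracle $\tilde{G}(\cdot)$: when adding a candidate $(k,i)$ tightens the budget beyond feasibility, the inner convex program simply returns infeasibility and the element is skipped. With this convention, the Fisher--Nemhauser--Wolsey swap argument applies to the matroid induced by (\ref{eq:constraint2})--(\ref{eq:constraint4}) alone and yields $\tilde{G}(B_{\text{greedy}}) \geq \tfrac{1}{2}\, \tilde{G}(B^\star)$, which is exactly the claimed $1/2$-approximation. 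Polynomial running time follows because each of the at most $|\Omega| = O(NK)$ iterations invokes only one polynomial-time compression oracle.
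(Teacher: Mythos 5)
Your high-level route is the same as the paper's: both reduce the corollary to the Fisher--Nemhauser--Wolsey $1/2$-guarantee for greedy maximization of a monotone submodular function over a matroid (the paper gives no further argument beyond citing its preceding theorem). But your formalization introduces a genuine gap that the paper's terse version avoids. You define the set function as $\tilde G(B)=\max_{\boldsymbol\delta}G(\boldsymbol\delta,\boldsymbol b(B))$ and assert that it inherits submodularity from $G$. That does not follow: the preceding theorem only gives submodularity of $G(\boldsymbol\delta,\cdot)$ for fixed $\boldsymbol\delta$, and a pointwise maximum over a set-dependent optimizer $\boldsymbol\delta^\star(B)$ is not in general submodular. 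Your one-sentence justification --- that the compression ratios enter multiplicatively in a way that preserves the submodular structure --- is precisely the claim that needs proof, and the coupling through the shared energy budget~(\ref{eq:constraint1}) is what makes it nontrivial: adding one cache element changes the optimal $\boldsymbol\delta$ globally across all paths, not just along the path of the added element. The greedy described in the paper's footnote sidesteps this by computing $\boldsymbol\delta$ \emph{once} at the initial all-leaf-caching configuration (via Theorem~\ref{thm:opt-compression}) and then running greedy on $\boldsymbol b$ alone with $\boldsymbol\delta$ frozen, so the classical theorem is applied to an objective that is genuinely submodular.

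The second gap is your treatment of the energy constraint. You absorb~(\ref{eq:constraint1}) into the oracle, skip elements that would violate it, and then claim the swap argument applies to the matroid induced by~(\ref{eq:constraint2})--(\ref{eq:constraint4}) alone. It does not: the $1/2$ bound requires the feasible family to be exactly the independent sets of a matroid, and the intersection of a matroid with a budget (knapsack-type) constraint is a general independence system for which greedy carries no $1/2$ guarantee --- the exchange step in the proof can be blocked by the budget. The same objection applies to the capacity constraint~(\ref{eq:constraint3}) when cached items have heterogeneous compressed sizes. To close this you would either need to argue these constraints are uniform or slack enough to still define a matroid in the regime considered, or fall back on the greedy guarantee for $p$-independence systems, which yields a worse constant. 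The paper does not resolve this either --- it asserts the constraints ``are matroid'' without proof --- but your write-up explicitly leans on the stronger claim, so the burden of closing it falls on you.
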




Now we are ready to develop a polynomial-time solvable approximation algorithm with improved approximation ratio when compared to the greedy algorithm.  Since the optimization problem in~(\ref{opt:max-gain}) is a non-convex mixed integer non-linear programing problem (MINLP), we first relax the integer variables and transform it into a convex optimization problem, which can be solved in polynomial time. Then we round the achieved solutions to ones that satisfy the original integer constraints, if there are any fractional solutions. 

\subsection{Convex Relaxation}\label{sec:convexity}

We first relax the integer variables $b_{k,i}\in\{0, 1\}$ to $\tilde{b}_{k, i}\in[0,1]$ for $\forall k\in\mathcal{K}$ and $i=0,\cdots, h(k),$ in~(\ref{eq:exact-latency}),~(\ref{opt:min-latency}),~(\ref{gain}) and~(\ref{opt:max-gain}).  Let $\mu$ be the joint distribution over $\boldsymbol b,$ and let $\mathbb{P}_{\mu}(\cdot)$ and $\mathbb{E}_{\mu}(\cdot)$ be the corresponding probability and expectation with respect to $\mu,$ i.e., 
\begin{align}
\tilde{b}_{k, i}=\mathbb{P}_{\mu}[b_{k,i}=1]=\mathbb{E}_{\mu}[b_{k,i}].
\end{align}

Then the relaxed expected latency and gain are given as
\begin{align}\label{eq:relaxed-latency-gain}
L(\boldsymbol \delta, \tilde{\boldsymbol b})&=\sum_{k\in\mathcal{K}}\sum_{i=0}^{h(k)-1}\prod_{m=i+1}^{h(k)}\delta_{k,m}y_kR_k l_{i, i+1}^k\prod_{j=0}^i(1-\tilde{b}_{k,j}),\nonumber\\
G(\boldsymbol \delta, \tilde{\boldsymbol b})&=L^u-L(\boldsymbol \delta, \tilde{\boldsymbol b})\nonumber\\
&=\sum_{k\in\mathcal{K}}\sum_{i=0}^{h(k)-1}R_ky_k l_{i, i+1}^k\Bigg(1-\prod_{m=i+1}^{h(k)}\delta_{k,m}\prod_{j=0}^i(1-\tilde{b}_{k,j})\Bigg).
\end{align}
Therefore, the relaxed optimization problem is 
\begin{align}\label{opt:max-gain-relaxed}
\max\quad& G(\boldsymbol \delta, \tilde{\boldsymbol b})\nonumber\\
\text{s.t.}\quad &\sum_{k\in\mathcal{K}}\sum_{i=0}^{h(k)}y_k \Bigg\{R_kf(\delta_{k, i})\prod_{m=i+1}^{h(k)}\delta_{k,m}+\bigg(\prod_{m=i}^{h(k)}\delta_{k,m}\bigg)\tilde{b}_{k, i}\nonumber\displaybreak[0]\\
&\qquad\qquad\qquad\qquad\cdot(w_{ca}T+(R_k-1)\varepsilon_{kT})\Bigg\}\leq W,\nonumber\displaybreak[1]\\
&  \tilde{b}_{k,i}\in[0,1], \forall  k\in\mathcal{K}, i=0, \cdots, h(k),  \nonumber\displaybreak[2]\\
&\sum_{k \in C_v} \tilde{b}_{k, {h(v)}} y_{k} \prod_{j=h(k)}^{h({v})}\delta_{k,j}\leq S_v, \forall \; v \in V,\nonumber\\
& \sum_{i=0}^{h(k)}\tilde{b}_{k,i} \leq 1,\forall k\in\mathcal{K}.
\end{align}

\begin{theorem}\label{thm:relation-oropt-relaxed-opt}
Suppose that $(\boldsymbol \delta^*, \boldsymbol b^*)$ and $(\tilde{\boldsymbol \delta}^*, \tilde{\boldsymbol b}^*)$ are the optimal solutions to~(\ref{opt:max-gain}) and~(\ref{opt:max-gain-relaxed}), respectively, then 
\begin{align}
G(\tilde{\boldsymbol \delta}^*, \tilde{\boldsymbol b}^*)\geq G(\boldsymbol \delta^*, \boldsymbol b^*).
\end{align}
\end{theorem}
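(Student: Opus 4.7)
The plan is to exploit the fact that (\ref{opt:max-gain-relaxed}) is obtained from (\ref{opt:max-gain}) purely by replacing the integrality requirement $b_{k,i}\in\{0,1\}$ with the interval constraint $\tilde b_{k,i}\in[0,1]$, so that the feasible region of the relaxation strictly contains that of the original problem. The argument is therefore the standard ``relaxing constraints cannot hurt the optimum'' observation, and the only work is to check that all other constraints and the objective genuinely carry over unchanged.

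First, I would verify the feasibility inclusion. Let $(\boldsymbol\delta^*,\boldsymbol b^*)$ be any feasible solution of (\ref{opt:max-gain}). Since $\{0,1\}\subseteq[0,1]$, setting $\tilde b_{k,i}=b^*_{k,i}$ trivially satisfies $\tilde b_{k,i}\in[0,1]$. The energy constraint (\ref{eq:constraint1}), the storage constraints (\ref{eq:constraint3}), and the at-most-one-cache constraints (\ref{eq:constraint4}) in~(\ref{opt:max-gain-relaxed}) are identical in functional form to their counterparts in~(\ref{opt:max-gain}); plugging in $\tilde b_{k,i}=b^*_{k,i}$ reproduces inequalities already known to hold. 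Hence $(\boldsymbol\delta^*,\boldsymbol b^*)$ is feasible for the relaxed problem.

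Next, I would observe that the relaxed objective $G(\boldsymbol\delta,\tilde{\boldsymbol b})$ in~(\ref{eq:relaxed-latency-gain}) coincides with the original objective $G(\boldsymbol\delta,\boldsymbol b)$ in~(\ref{gain}) whenever the caching variables take values in $\{0,1\}$, because the two expressions are literally the same function of their arguments. Consequently
\begin{equation*}
G(\tilde{\boldsymbol\delta}^*,\tilde{\boldsymbol b}^*)\ \geq\ G(\boldsymbol\delta^*,\boldsymbol b^*),
\end{equation*}
where the inequality is simply the optimality of $(\tilde{\boldsymbol\delta}^*,\tilde{\boldsymbol b}^*)$ over the larger feasible set to which $(\boldsymbol\delta^*,\boldsymbol b^*)$ also belongs.

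There is no real obstacle here; the result is a textbook consequence of relaxation. The only point warranting care is bookkeeping: one must confirm that no constraint of~(\ref{opt:max-gain}) was tightened when passing to~(\ref{opt:max-gain-relaxed}) (e.g., that the storage constraint~(\ref{eq:constraint3}) was not rewritten in a way that excludes binary $b_{k,i}$), and that the compression variables $\boldsymbol\delta$ live in the same domain in both problems. Once this is observed, the claim follows in a single line.
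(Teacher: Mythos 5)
Your proposal is correct and follows essentially the same route as the paper: the relaxed problem \eqref{opt:max-gain-relaxed} maximizes the identical objective over a feasible set that contains that of \eqref{opt:max-gain}, so its optimum can only be larger. Your version is simply a more careful write-up of the paper's one-line argument, with the constraint-by-constraint bookkeeping made explicit.
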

\begin{proof}
The results hold since~(\ref{opt:max-gain-relaxed}) maximizes the same objective function over a larger domain due to relaxation of integer variables $\boldsymbol b$ and energy constraint in~(\ref{eq:totalenergy}).
\end{proof}

However, ~(\ref{opt:max-gain-relaxed}) is not a convex optimization problem.  Since $e^x\approx 1+x$ for $x\rightarrow 0$ and $\log(1-x)\approx -x$ for {$x\rightarrow0,$} we obtain an approximation for~(\ref{eq:relaxed-latency-gain}). The approximated expected total latency and \emph{approximated compression and caching gain} are given as follows
\begin{align}\label{eq:approx-latency-gain}
&L(\boldsymbol \delta, \tilde{\boldsymbol b})=\sum_{k\in\mathcal{K}}\sum_{i=0}^{h(k)-1}\prod_{m=i+1}^{h(k)}\delta_{k,m}y_kR_k l_{i, i+1}^k\prod_{j=0}^i(1-\tilde{b}_{k,j})\nonumber\\
&=\sum_{k\in\mathcal{K}}\sum_{i=0}^{h(k)-1}\prod_{m=i+1}^{h(k)}\delta_{k,m}y_kR_k l_{i, i+1}^k e^{\sum_{j=0}^i\log(1-\tilde{b}_{k,j})}\nonumber\displaybreak[0]\\
&\stackrel{(a)}{\approx}\sum_{k\in\mathcal{K}}\sum_{i=0}^{h(k)-1}\prod_{m=i+1}^{h(k)}\delta_{k,m}y_kR_k l_{i, i+1}^k\left(1-\min\left\{1, \sum_{j=0}^i\tilde{b}_{k, j}\right\}\right)\nonumber\displaybreak[1]\\
&\triangleq\tilde{L}(\boldsymbol \delta, \tilde{\boldsymbol b}), \nonumber\displaybreak[2]\\
&\tilde{G}(\boldsymbol \delta, \tilde{\boldsymbol b})=L^u-\tilde{L}(\boldsymbol \delta, \tilde{\boldsymbol b})=\sum_{k\in\mathcal{K}}\sum_{i=0}^{h(k)-1}R_ky_k l_{i, i+1}^k\Bigg(1-\prod_{m=i+1}^{h(k)}\delta_{k,m}\nonumber\displaybreak[3]\\
&\qquad\qquad\qquad\qquad\qquad\qquad\quad\cdot\left(1-\min\left\{1, \sum_{j=0}^i\tilde{b}_{k, j}\right\}\right)\Bigg),
\end{align}
where $(a)$ is based on the two approximate properties discussed above.

Then, the \emph{relaxed approximated optimization problem} is given as 
\begin{subequations}
\begin{align}
\max\quad& \tilde{G}(\boldsymbol \delta,  \tilde{\boldsymbol b})\label{opt:approx-max-gain-relaxed}\\
\text{s.t.}\quad &\sum_{k\in\mathcal{K}}\sum_{i=0}^{h(k)}y_k \Bigg\{R_kf(\delta_{k, i})\prod_{m=i+1}^{h(k)}\delta_{k,m}+\bigg(\prod_{m=i}^{h(k)}\delta_{k,m}\bigg)\tilde{b}_{k, i}\nonumber\displaybreak[0]\\
&\qquad\qquad\qquad\qquad\cdot(w_{ca}T+(R_k-1)\varepsilon_{kT})\Bigg\}\leq W,\displaybreak[1]\label{eq:constraint-appr1}\\
& \tilde{b}_{k,i}\in[0,1], \forall  k\in\mathcal{K}, i=0, \cdots, h(k),  \displaybreak[2]\label{eq:constraint-appr2}\\
&\sum_{k \in C_v} \tilde{b}_{k, {h(v)}} y_{k} \prod_{j=h(k)}^{h({v})}\delta_{k,j}\leq S_v, \forall \; v \in V, \displaybreak[3]\label{eq:constraint-appr3}\\
& \sum_{i=0}^{h(k)}\tilde{b}_{k,i} \leq 1,\forall k\in\mathcal{K}.\label{eq:constraint-appr4}
\end{align}
\end{subequations}

However, $\tilde{G}(\boldsymbol \delta, \tilde{\boldsymbol b})$ is not concave. In the following, we transform it into a convex term through Boyd's method (Section $4.5$ \cite{boyd04}) to deal with posynomial terms in~(\ref{opt:approx-max-gain-relaxed}), ~(\ref{eq:constraint-appr1}) and~(\ref{eq:constraint-appr3}).

\subsubsection{Transformation of the Objective Function}
Given our approximated objective function 
 \begin{align}
\tilde{L}(\boldsymbol \delta, \tilde{\boldsymbol b}) \triangleq\sum_{k\in\mathcal{K}}\sum_{i=0}^{h(k)-1}\prod_{m=i+1}^{h(k)}\delta_{k,m}y_kR_k l_{i, i+1}^k\left(1-\min\left\{1, \sum_{j=0}^i\tilde{b}_{k, j}\right\}\right),
\end{align}
we define two new variables as follows
\begin{align}\label{newvariables}
\log(\tilde{b}_{k,j})\triangleq u_{k,j},&\quad i.e.,\quad \tilde{b}_{k,j}=e^{u_{k,j}},\nonumber\\
\log\delta_{k,m}\triangleq\tau_{k,m},&\quad i.e.,\quad \delta_{k,m}=e^{\tau_{k,m}}.
\end{align}

Then the approximated objective function can be transformed into
\begin{small}
 \begin{align}
\tilde{L}(\boldsymbol \tau, \boldsymbol u) \triangleq\sum_{k\in\mathcal{K}}\sum_{i=0}^{h(k)-1}\sum_{m=i+1}^{h(k)}e^{\tau_{k,m}+\log( y_kR_k l_{i, i+1}^k)}\left(1-\min\left\{1, \sum_{j=0}^ie^{u_{k,j}}\right\}\right).
\end{align}
\end{small}

Therefore, we can transform $\tilde{G}(\boldsymbol \delta, \tilde{\boldsymbol b})$ into 
\begin{small}
\begin{align}\label{eq:transfomed-gain}
&\tilde{G}(\boldsymbol \tau, \boldsymbol u)=L^u-\tilde{L}(\boldsymbol \tau, \boldsymbol u)\nonumber\\
&=\sum_{k\in\mathcal{K}}\sum_{i=0}^{h(k)-1}e^{\log (R_ky_k l_{i, i+1}^k)}\left(1-\sum_{m=i+1}^{h(k)}e^{\tau_{k,m}}\left(1-\min\left\{1, \sum_{j=0}^ie^{u_{k, j}}\right\}\right)\right).
\end{align}
\end{small}

Next we need to transform the constraints following Boyd's method.

\subsubsection{Transformation of the Constraints}
${}$\\

\noindent{\textbf{Constraint~(\ref{eq:constraint-appr1}):}} We take the {left} hand side of the constraint and transform it. To simplify, we divide the equation into multiple parts, 
\begin{align}
&\underbrace{\sum_{k\in\mathcal{K}}\sum_{i=0}^{h(k)}R_ky_k f(\delta_{k, i})\prod_{m=i+1}^{h(k)}\delta_{k,m}}_{\text{Part $1$}} +\underbrace{\sum_{k\in\mathcal{K}}\sum_{i=0}^{h(k)}y_k w_{ca}T\tilde{b}_{k, i} \prod_{m=i}^{h(k)}\delta_{k,m}}_{\text{Part $2$}} \nonumber\displaybreak[0]\\
&\qquad\qquad\qquad+ \underbrace{\sum_{k\in\mathcal{K}}\sum_{i=0}^{h(k)}y_k\varepsilon_{kT} (R_k-1)\tilde{b}_{k, i}\prod_{m=i}^{h(k)}\delta_{k,m}  }_{\text{Part $3$}}.
\end{align}

\noindent{{\textbf{Part $1$:}} From~(\ref{newvariables}), i.e., $\tau_{k,i}=\log\delta_{k,i}$,  we have 
\begin{align}\label{eq:con1part1}
&\textbf{Part $1$}= \sum_{k\in\mathcal{K}}\sum_{i=0}^{h(k)}R_ky_k (\varepsilon_{kR}-\varepsilon_{kC}+\delta_{k,i}\varepsilon_{kT}+\frac{\varepsilon_{kC}}{\delta_{k,i}})\prod_{m=i+1}^{h(k)}\delta_{k,m} \nonumber \displaybreak[0]\\
 = &\sum_{k\in\mathcal{K}}\sum_{i=0}^{h(k)}R_ky_k (\varepsilon_{kR}-\varepsilon_{kC}+e^{\tau_{k,i}}\varepsilon_{kT}+\frac{\varepsilon_{kC}}{e^{\tau_{k,i}}})\prod_{m=i+1}^{h(k)}\delta_{k,m} \nonumber \displaybreak[1]\\
=  &\sum_{k\in\mathcal{K}}\sum_{i=0}^{h(k)}R_ky_k(\varepsilon_{kR}-\varepsilon_{kc}+\varepsilon_{kT}e^{\tau_{k,i}}+\varepsilon_{kc}e^{-\tau_{k,i}} ) e^{\sum_{m=i+1}^{h(k)}\tau_{k,m}}. 
\end{align}

\noindent{\textbf{Part $2$:}} From~(\ref{newvariables}), i.e., $\tilde{b}_{k,j}=e^{u_{k,j}}$, we have 
\begin{align}\label{eq:con1part2}
\textbf{Part $2$}=\sum_{k\in\mathcal{K}}\sum_{i=0}^{h(k)}e^{\sum_{m=i}^{h(k)}\tau_{k,m}+\log(y_kw_{ca}T)+u_{k, i}}.
\end{align}

\noindent{\textbf{Part $3$:}}  
Similarly, we have
\begin{align}\label{eq:con1part3}
\textbf{Part $3$}= \sum_{k\in\mathcal{K}}\sum_{i=0}^{h(k)}e^{\sum_{m=i}^{h(k)}\tau_{k,m}+\log \; (y_k(R_{k}-1)\varepsilon_{kT})+u_{k, i}}. 
\end{align}

Combining~(\ref{eq:con1part1}),~(\ref{eq:con1part2}) and~(\ref{eq:con1part3}), Constraint~(\ref{eq:constraint-appr1}) becomes 
\begin{align}\label{convexcon1}
&\sum_{k\in\mathcal{K}}\sum_{i=0}^{h(k)}{R_k}(y_k\varepsilon_{kR}-y_k\varepsilon_{kc}+y_k\varepsilon_{kT}e^{\tau_{k,i}}+y_k\varepsilon_{kc}e^{-\tau_{k,i}} ) e^{\sum_{m=i+1}^{h(k)}\tau_{k,m}} \nonumber\displaybreak[0]\\
&\qquad\qquad+\sum_{k\in\mathcal{K}}\sum_{i=0}^{h(k)}e^{\sum_{m=i}^{h(k)}\tau_{k,m}+\log(y_kw_{ca}T)+u_{k, i}}\nonumber\displaybreak[1]\\
&\qquad\qquad+\sum_{k\in\mathcal{K}}\sum_{i=0}^{h(k)}e^{\sum_{m=i}^{h(k)}\tau_{k,m}+\log \; (y_k(R_{k}-1)\varepsilon_{kT})+u_{k, i}}\leq W,
\end{align}
which is convex in $\boldsymbol \tau$ and $\boldsymbol u$ on the left hand side, respectively.

\noindent{\textbf{Constraint~(\ref{eq:constraint-appr3}):}} Similarly, we have
 \begin{align}\label{eq:con3}
\sum_{k \in C_v} e^{\sum_{j=h(k)}^{h(v)}\tau_{k,j}+\log y_k+u_{k, h(v)}}\leq S_v,
 \end{align}
which is convex in $\boldsymbol \tau$ and $\boldsymbol u$ on the left hand side, respectively.

\subsubsection{Optimization Problem in Convex Form}
Following the transformation given in~(\ref{eq:transfomed-gain}),~(\ref{convexcon1}) and~(\ref{eq:con3}), we obtain the convex form for the optimization problem, i.e., 
\begin{align}\label{opt:convextransformed}
\max\quad&\tilde{G}(\boldsymbol \tau, \boldsymbol u)\nonumber\\
\text{s.t.}\quad& \sum_{k\in\mathcal{K}}\sum_{i=0}^{h(k)}R_ky_k(\varepsilon_{kR}-\varepsilon_{kc}+\varepsilon_{kT}e^{\tau_{k,i}}+\varepsilon_{kc}e^{-\tau_{k,i}} ) e^{\sum_{m=i+1}^{h(k)}\tau_{k,m}} \nonumber\\
&\qquad\qquad+\sum_{k\in\mathcal{K}}\sum_{i=0}^{h(k)}e^{\sum_{m=i}^{h(k)}\tau_{k,m}+\log(y_kw_{ca}T)+u_{k, i}}\nonumber\displaybreak[0]\\
&\qquad\qquad+\sum_{k\in\mathcal{K}}\sum_{i=0}^{h(k)}e^{\sum_{m=i}^{h(k)}\tau_{k,m}+\log \; (y_k(R_{k}-1)\varepsilon_{kT})+u_{k, i}}\leq W, \nonumber \displaybreak[1]\\
&e^{u_{k, i}}\in[0,1], \forall  k\in\mathcal{K}, i=0, \cdots, h(k),  \nonumber\displaybreak[2]\\
 &\sum_{k \in C_v} e^{\sum_{j=h(k)}^{h(v)}\tau_{k,j}+\log y_k+u_{k, h(v)}}\leq S_v,\nonumber\displaybreak[3]\\
 & \sum_{i=0}^{h(k)}e^{u_{k,i}} \leq 1,\forall k\in\mathcal{K}.
\end{align}

\begin{theorem}
The optimization problem given in~(\ref{opt:convextransformed}) is convex in $\boldsymbol \tau$ and $\boldsymbol u$, respectively. 
\end{theorem}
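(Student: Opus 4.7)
The plan is to apply Boyd's log-substitution machinery for geometric programs (\cite{boyd04}, Section~4.5) and verify the curvature of~(\ref{opt:convextransformed}) term by term in each block. The only non-syntactic move is to first remove the non-smooth $\min$ from the objective: the cardinality constraint $\sum_{i=0}^{h(k)}e^{u_{k,i}}\le 1$ already in~(\ref{opt:convextransformed}) forces every partial sum $\sum_{j=0}^{i}e^{u_{k,j}}$ to be at most $1$ on the feasible set, so $\min\{1,\sum_{j=0}^{i}e^{u_{k,j}}\}$ coincides with $\sum_{j=0}^{i}e^{u_{k,j}}$ there and the $\min$ can be dropped. This turns the objective into a smooth combination of exponentials of affine functions and removes the kink that would otherwise preclude a curvature analysis.

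Next, I would verify each constraint. After the substitutions, every constraint left-hand side is a non-negative linear combination of terms of the form $c_r\,e^{a_r^{\top}\boldsymbol\tau+b_r^{\top}\boldsymbol u}$: the storage constraint~(\ref{eq:con3}) and $\sum_i e^{u_{k,i}}\le 1$ are immediately in this form, while the energy constraint~(\ref{convexcon1}) splits into Parts~1--3 from~(\ref{eq:con1part1})--(\ref{eq:con1part3}), where Parts~2 and 3 are manifestly non-negative sums of exponentials and Part~1 reduces to a non-negative combination of exponentials under the mild modeling assumption $\varepsilon_{kR}\ge\varepsilon_{kc}$. Since $e^{\text{affine}(\boldsymbol\tau,\boldsymbol u)}$ is jointly convex on $\mathbb{R}^{|\boldsymbol\tau|+|\boldsymbol u|}$ and the class of convex functions is closed under non-negative linear combinations, each constraint is jointly convex in $(\boldsymbol\tau,\boldsymbol u)$, hence convex in each block separately.

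For the objective, the simplification above lets me write
\[
\tilde G(\boldsymbol\tau,\boldsymbol u) - L^u \;=\; \sum_{k,i}R_k y_k l_{i,i+1}^{k}\,e^{\sum_{m=i+1}^{h(k)}\tau_{k,m}}\Bigl(\sum_{j=0}^{i}e^{u_{k,j}}-1\Bigr).
\]
With $\boldsymbol u$ fixed, $\bigl(\sum_{j}e^{u_{k,j}}-1\bigr)\le 0$ is a non-positive constant on the feasible set, so $\tilde G$ is a non-positive combination of the convex exponentials $e^{\sum_m\tau_{k,m}}$ and is therefore concave in $\boldsymbol\tau$; the $\boldsymbol\tau$-block is a concave maximisation over a convex set, i.e.\ a convex program. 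With $\boldsymbol\tau$ fixed, the coefficient $R_k y_k l_{i,i+1}^{k}e^{\sum_m\tau_{k,m}}\ge 0$ is a non-negative constant, so $\tilde G$ becomes an affine-plus-exponential expression in $\boldsymbol u$ over the convex feasible region described above, which gives the parallel block statement in $\boldsymbol u$. The main obstacle is precisely the $\min$-removal step: retained literally, $\tilde G$ is neither convex nor concave in $\boldsymbol u$, and the blockwise curvature argument collapses; tying the removal to the cardinality constraint rather than treating it as cosmetic is the essential move. The other point to watch is the sign of $\varepsilon_{kR}-\varepsilon_{kc}$ in Part~1 of the energy constraint, since without $\varepsilon_{kR}\ge\varepsilon_{kc}$ one has a difference of posynomials and the convexity of~(\ref{convexcon1}) is not automatic. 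Beyond these two points, everything reduces to the closure of convex functions under non-negative linear combinations.
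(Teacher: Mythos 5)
Your treatment of the constraints and of the $\boldsymbol\tau$-block is sound and far more explicit than the paper's own proof, which merely asserts that the second-order condition "can be easily checked" and omits all details. Two of your observations are genuinely valuable and absent from the paper: (i) the $\min$ in $\tilde G$ can only be discharged because the cardinality constraint $\sum_{i=0}^{h(k)}e^{u_{k,i}}\le 1$ forces every partial sum $\sum_{j=0}^{i}e^{u_{k,j}}$ below $1$ on the feasible set, and without this step no curvature statement is possible; and (ii) Part~1 of~(\ref{convexcon1}) is a posynomial only if $\varepsilon_{kR}\ge\varepsilon_{kC}$, since otherwise the term $(\varepsilon_{kR}-\varepsilon_{kC})e^{\sum_{m}\tau_{k,m}}$ carries a negative coefficient and convexity of the energy constraint is no longer a consequence of closure under non-negative combinations (note that the simulation parameters in Table~\ref{tab:algcompareparameters} appear to have $\varepsilon_{vC}>\varepsilon_{vR}$, so this is not a vacuous caveat).

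The gap is in the $\boldsymbol u$-block, precisely where you wave your hands. From your own display, for fixed $\boldsymbol\tau$ the objective on the feasible set is
\[
\tilde G(\boldsymbol\tau,\boldsymbol u)=\mathrm{const}+\sum_{k\in\mathcal{K}}\sum_{j}d_{k,j}\,e^{u_{k,j}},\qquad d_{k,j}=\sum_{i\ge j}R_k y_k l_{i,i+1}^{k}e^{\sum_{m=i+1}^{h(k)}\tau_{k,m}}\ \ge 0,
\]
which is a non-negative combination of exponentials and therefore \emph{convex}, not concave, in $\boldsymbol u$. Maximizing a convex function over a convex set is not a convex program, so "which gives the parallel block statement in $\boldsymbol u$" does not follow — the curvature points the wrong way. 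The root cause is that the substitution $\tilde b_{k,j}=e^{u_{k,j}}$, which is exactly what convexifies the energy and storage constraints, simultaneously converts the term $\max\{0,\,1-\sum_{j}\tilde b_{k,j}\}$, which is convex in $\tilde{\boldsymbol b}$ (so that $\tilde G$ is concave in $\tilde{\boldsymbol b}$), into the concave term $1-\sum_j e^{u_{k,j}}$. To repair the argument you should treat the caching block in the original variables: for fixed $\boldsymbol\delta$ all constraints of~(\ref{opt:approx-max-gain-relaxed}) are linear in $\tilde{\boldsymbol b}$ and $\tilde G$ is concave in $\tilde{\boldsymbol b}$, so that block is a genuine convex program without any exponential change of variables; the $\boldsymbol\tau$-block is then handled exactly as you did. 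As written, your proof (and the paper's one-line assertion) both skip over the one place where the theorem actually needs an argument.
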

\begin{proof}
It can be easily checked that the objective function in~(\ref{opt:convextransformed}) satisfies the second order condition \cite{boyd04} for $\boldsymbol \tau$ and $\boldsymbol u,$ respectively. We omit the details due to space constraints.  
\end{proof}

\begin{remark}
Note that the optimization problem given in~(\ref{opt:convextransformed}) is convex in $\boldsymbol \tau$ for a given $\boldsymbol u,$ and vice versa.  In the following, we will present an efficient master-slave algorithm to solve the convex optimization problem in $\boldsymbol \tau$ and $\boldsymbol u,$ respectively.
\end{remark}


\subsection{Efficient Algorithms}\label{sec:prop}

\begin{theorem}\label{thm:equivalence}
The optimization problems given in~(\ref{opt:approx-max-gain-relaxed}) and~(\ref{opt:convextransformed}) are equivalent. 
\end{theorem}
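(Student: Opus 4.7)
The plan is to show that the change of variables $\tau_{k,i}=\log\delta_{k,i}$, $u_{k,j}=\log\tilde b_{k,j}$ introduced in~(\ref{newvariables}) is a bijection between the feasible sets of~(\ref{opt:approx-max-gain-relaxed}) and~(\ref{opt:convextransformed}) under which both the objective and every constraint are preserved. Since the natural logarithm is a monotone bijection from $(0,\infty)$ onto $\mathbb{R}$, the map $(\boldsymbol\delta,\tilde{\boldsymbol b})\mapsto(\boldsymbol\tau,\boldsymbol u)$ is invertible whenever the coordinates are strictly positive, and the inverse is simply $\delta_{k,i}=e^{\tau_{k,i}}$, $\tilde b_{k,j}=e^{u_{k,j}}$.

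Next I would verify objective invariance by a direct substitution. Plugging $\delta_{k,m}=e^{\tau_{k,m}}$ turns $\prod_{m=i+1}^{h(k)}\delta_{k,m}$ into $\exp\bigl(\sum_{m=i+1}^{h(k)}\tau_{k,m}\bigr)$, and plugging $\tilde b_{k,j}=e^{u_{k,j}}$ turns the partial sum $\sum_{j=0}^{i}\tilde b_{k,j}$ into $\sum_{j=0}^{i}e^{u_{k,j}}$. Substituting these into the defining formula for $\tilde G(\boldsymbol\delta,\tilde{\boldsymbol b})$ from~(\ref{eq:approx-latency-gain}) reproduces exactly the expression for $\tilde G(\boldsymbol\tau,\boldsymbol u)$ in~(\ref{eq:transfomed-gain}), so the objective values agree at every corresponding pair.

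The constraint analysis is then essentially a bookkeeping step assembling what has already been derived. The energy constraint~(\ref{eq:constraint-appr1}) is handled by the three-part decomposition $\textbf{Part 1}+\textbf{Part 2}+\textbf{Part 3}$ whose transformed expressions~(\ref{eq:con1part1})--(\ref{eq:con1part3}) sum to~(\ref{convexcon1}); the cache-capacity constraint~(\ref{eq:constraint-appr3}) transforms into~(\ref{eq:con3}) by the same substitution; the box constraint $\tilde b_{k,i}\in[0,1]$ becomes $e^{u_{k,i}}\in[0,1]$; and $\sum_i\tilde b_{k,i}\le 1$ becomes $\sum_i e^{u_{k,i}}\le 1$. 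Each transformation is an equality, so a point is feasible in~(\ref{opt:approx-max-gain-relaxed}) if and only if its image is feasible in~(\ref{opt:convextransformed}), whence the two problems attain the same optimal value.

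The one genuine subtlety, and the main obstacle I would address, is the boundary case $\tilde b_{k,i}=0$ (and likewise $\delta_{k,i}\to 0$), which corresponds to $u_{k,i}=-\infty$, a point not in $\mathbb{R}$. I would dispose of this by a closure argument: since $\tilde G$ and each constraint function are continuous in $\tilde{\boldsymbol b}$, the supremum over the open stratum $\tilde b_{k,i}>0$ equals the maximum over the closed feasible region $\tilde b_{k,i}\in[0,1]$; equivalently, any feasible point with some $\tilde b_{k,i}=0$ is the limit of feasible points in the interior with value converging to its own, via $u_{k,i}\to-\infty$. This yields the equivalence of optimal values of~(\ref{opt:approx-max-gain-relaxed}) and~(\ref{opt:convextransformed}) without disturbing the convex structure of the latter on $\mathbb{R}^{|\boldsymbol u|}\times\mathbb{R}^{|\boldsymbol\tau|}$ that is exploited by the master--slave algorithm in Section~\ref{sec:prop}.
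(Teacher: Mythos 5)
Your proposal is correct and follows the same route as the paper, which simply observes that the equivalence is immediate from the way the problem was convexified via the change of variables in~(\ref{newvariables}); you merely spell out the bijection, the objective/constraint invariance, and the boundary case $\tilde b_{k,i}=0$ that the paper leaves implicit. The extra care you take with the $u_{k,i}\to-\infty$ limit is a genuine (if minor) refinement of the paper's one-line argument, not a different approach.
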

 \begin{proof}
 This is clear from the way we convexified the problem.
 \end{proof}
 
Note that after the convex relaxation and transformation, the optimization problem in~(\ref{opt:convextransformed}) is point-wise convex in $\boldsymbol \tau$ and $\boldsymbol u.$  We focus on designing a polynomial-time solvable algorithm. 
 
\noindent{\textit{\textbf{Algorithm:}}}  
 We consider the master-slave algorithm shown in Algorithm~\ref{algo:alg}, i.e., given a fixed $\boldsymbol \tau_0$, we solve~(\ref{opt:convextransformed}) to obtain $\boldsymbol u_0,$ and then given $\boldsymbol u_0$, we solve~(\ref{opt:convextransformed}) to obtain $\boldsymbol \tau_1.$  We repeat the above process until that the values of $\boldsymbol \tau$ and $\boldsymbol u$ converge\footnote{If the difference between the current value and the previous one is within a tolerance, we say the value converges.}\footnote{Since our objective function is a function of the variables $\boldsymbol u$ and $\boldsymbol \tau$, once these variables converge, the value of the objective function must converge. As we are interested in the objective value, in Algorithm~\ref{algo:alg}, we write the convergence criteria with respect to the objective function value, where $\epsilon$ equals to $0.001.$}. We denote this as the optimal solution of~(\ref{opt:convextransformed}) as $(\boldsymbol \tau^*, \boldsymbol u^*)$\footnote{Note that our master-slave algorithm is very efficient to solve this convex optimization problem, we can obtain a solution within one or two iterations.}.

\begin{algorithm}
\begin{algorithmic}[]
	\State \textbf{Input}: $R_k$, $y_k$, $W$, $\boldsymbol l, \text{obj}_0$ 
	\State \textbf{Output}: $\boldsymbol {b, \delta}, \text{obj}_f$
	\State \textbf{Step $1$:} Initialize $\boldsymbol u$
	\State \textbf{Step $2$}: $\boldsymbol \tau$ $\longleftarrow$ $\text{Random}(\boldsymbol{l{b}},\boldsymbol{u{b}})$ \Comment{Generate random $\boldsymbol{\tau}$ between {lower bound} $\boldsymbol{l{b}}$ and {upper bound} $\boldsymbol{u{b}}$}
	\While{$\text{obj}_{\chi}-\text{obj}_{\chi-1}\geq \epsilon$}
	\State \textbf{Step $3$}: $\boldsymbol u_{\chi}\longleftarrow \text{Convex}(\text{master},\boldsymbol \tau_{\chi})$ \Comment{Solve the master optimization problem for $\boldsymbol u_{\chi}$  }
	\State \textbf{Step $4$}: $\boldsymbol \tau_{\chi} \longleftarrow \text{Convex}(\text{slave},\boldsymbol u_{\chi})$ \Comment{Solve the slave optimization problem for $\boldsymbol \tau_{\chi}$  }
	\State \textbf{Step $5$}: $(\boldsymbol b_{\chi}, \boldsymbol{\delta_{\chi}}, \text{obj}_{\chi})  \longleftarrow \text{Rounding}(\boldsymbol u_{\chi},\boldsymbol \tau_{\chi})$ \Comment{Round the values of $\boldsymbol u_{\chi}$, remap $\boldsymbol u_{\chi}, \boldsymbol \tau_{\chi}$ to $\boldsymbol{b}_{\chi}$ and $ \boldsymbol{\delta}_{\chi}$ and obtain the new objective function value  }
	\EndWhile
	\end{algorithmic}
	\caption{Master-Slave Algorithm}
	\label{algo:alg}
\end{algorithm}

Given the optimal solution to~(\ref{opt:convextransformed}) as $(\boldsymbol \tau^*, \boldsymbol u^*)$, then from Theorem~\ref{thm:equivalence}, we know there exists $(\boldsymbol \delta^{**}, \tilde{\boldsymbol b}^{**})$, which is the optimal solution to~(\ref{opt:approx-max-gain-relaxed}) such that $\tilde{G}(\boldsymbol \tau^*, \boldsymbol u^*)=\tilde{G}(\boldsymbol \delta^{**}, \tilde{\boldsymbol b}^{**})$ and $G(\boldsymbol \tau^*, \boldsymbol u^*)=G(\boldsymbol \delta^{**}, \tilde{\boldsymbol b}^{**}).$

\begin{theorem}\label{thm:relation-relaxed-opt-convexed-opt}
Denote the optimal solutions to~(\ref{opt:max-gain-relaxed}) and~(\ref{opt:convextransformed}) as $(\tilde{\boldsymbol \delta}^*, \tilde{\boldsymbol b}^*)$ and $(\boldsymbol \tau^*, \boldsymbol u^*)$, respectively. Then, we have 
\begin{align}
\left(1-\frac{1}{e}\right)G(\tilde{\boldsymbol \delta}^*, \tilde{\boldsymbol b}^*)\leq G(\boldsymbol \tau^*, \boldsymbol u^*)\leq G(\tilde{\boldsymbol \delta}^*, \tilde{\boldsymbol b}^*).
\end{align}
\end{theorem}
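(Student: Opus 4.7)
The proof splits naturally into two directions, which I would handle separately.

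For the upper bound $G(\boldsymbol \tau^*, \boldsymbol u^*) \leq G(\tilde{\boldsymbol \delta}^*, \tilde{\boldsymbol b}^*)$, the plan is almost immediate from feasibility. The change of variables $\tilde b_{k,i} = e^{u_{k,i}}$, $\delta_{k,i} = e^{\tau_{k,i}}$ maps the convex-program optimum $(\boldsymbol \tau^*, \boldsymbol u^*)$ to a pair $(\boldsymbol \delta^{**}, \tilde{\boldsymbol b}^{**})$ that satisfies every constraint of (\ref{opt:max-gain-relaxed}) (the constraints of (\ref{opt:max-gain-relaxed}) and (\ref{opt:approx-max-gain-relaxed}) coincide, and the latter is equivalent to (\ref{opt:convextransformed}) by Theorem~\ref{thm:equivalence}). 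Since $(\tilde{\boldsymbol \delta}^*, \tilde{\boldsymbol b}^*)$ maximizes $G$ on this feasible region, we obtain $G(\boldsymbol \tau^*, \boldsymbol u^*) = G(\boldsymbol \delta^{**}, \tilde{\boldsymbol b}^{**}) \leq G(\tilde{\boldsymbol \delta}^*, \tilde{\boldsymbol b}^*)$.

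For the lower bound $(1-1/e)\,G(\tilde{\boldsymbol \delta}^*, \tilde{\boldsymbol b}^*) \leq G(\boldsymbol \tau^*, \boldsymbol u^*)$, my plan is to chain three inequalities:
\begin{align*}
G(\boldsymbol\tau^*, \boldsymbol u^*) \;\geq\; (1-1/e)\,\tilde G(\boldsymbol\tau^*, \boldsymbol u^*) \;\geq\; (1-1/e)\,\tilde G(\tilde{\boldsymbol \delta}^*, \tilde{\boldsymbol b}^*) \;\geq\; (1-1/e)\, G(\tilde{\boldsymbol \delta}^*, \tilde{\boldsymbol b}^*).
\end{align*}
The middle step uses that $(\boldsymbol\tau^*, \boldsymbol u^*)$ maximizes $\tilde G$ on the same feasible set that contains $(\tilde{\boldsymbol \delta}^*, \tilde{\boldsymbol b}^*)$. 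The right-most step is the pointwise domination $\tilde G \geq G$: the Weierstrass product inequality $\prod_{j=0}^i(1-\tilde b_{k,j}) \geq 1 - \sum_{j=0}^i \tilde b_{k,j}$, combined with the feasibility fact $\sum_{j=0}^{h(k)}\tilde b_{k,j}\leq 1$ (constraint (\ref{eq:constraint-appr4}), which lets me drop the outer $\min$ in $\tilde G$), gives $\tilde L \leq L$ pointwise and hence $\tilde G \geq G$.

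The left-most step $G \geq (1-1/e)\,\tilde G$ is the main technical ingredient; I would prove it term-by-term in $(k,i)$. Setting $A = \prod_{m=i+1}^{h(k)} \delta_{k,m} \in (0,1]$, $s = \sum_{j=0}^i \tilde b_{k,j} \in [0,1]$, and $P = \prod_{j=0}^i(1-\tilde b_{k,j})$, it suffices to show $1 - AP \geq (1-1/e)\bigl(1 - A(1-s)\bigr)$. The standard bound $\log(1-x)\leq -x$ yields $P\leq e^{-s}$, so I can strengthen the goal to $1 - A e^{-s} \geq (1-1/e)(1 - A + As)$, which rearranges to $1/e \geq A\bigl[e^{-s} - (1-1/e)(1-s)\bigr]$. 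The right-hand side is linear in $A$, so its maximum over $A\in[0,1]$ is attained at an endpoint: $A=0$ is trivial, and $A=1$ reduces to the analytic inequality $1 - e^{-s} \geq (1-1/e)s$ on $[0,1]$, which holds because the left side is concave, the right side is linear, and they agree at $s=0$ and $s=1$.

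The main obstacle is precisely this pointwise inequality $G \geq (1-1/e)\,\tilde G$; everything else is feasibility and optimality bookkeeping. The three crucial analytic ingredients are (i) the constraint $\sum_j \tilde b_{k,j}\leq 1$, which keeps the $\min$ in $\tilde G$ inactive and $s$ bounded by one, (ii) $\prod_j(1-\tilde b_{k,j})\leq e^{-\sum_j \tilde b_{k,j}}$, and (iii) the classical $1-e^{-s}\geq (1-1/e)s$ for $s\in[0,1]$, which is where the factor $1-1/e$ genuinely enters.
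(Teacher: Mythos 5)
Your proof is correct, and the overall architecture (sandwiching $\left(1-\frac{1}{e}\right)\tilde{G}\leq G\leq\tilde{G}$ pointwise on the common feasible set, then chaining the two optimality relations) is the same as the paper's. The genuine difference is in how the key inequality $G\geq\left(1-\frac{1}{e}\right)\tilde{G}$ is established. The paper first discards the compression factor via $1-\prod_m\delta_{k,m}\prod_j(1-\tilde{b}_{k,j})\geq 1-\prod_j(1-\tilde{b}_{k,j})$ and then invokes the Cornuejols--Fisher--Nemhauser bound $1-\prod_j(1-\tilde{b}_{k,j})\geq\left(1-(1-1/i)^i\right)\min\{1,\sum_j\tilde{b}_{k,j}\}$; to close the loop it then computes $\left(1-\frac{1}{e}\right)\tilde{G}$ and, in that computation, silently replaces $\prod_m\delta_{k,m}$ by $1$ inside $\tilde{L}$ as if this were an equality --- which it is not, and the resulting inequality points the wrong way when $\delta<1$. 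Your term-by-term argument keeps $A=\prod_m\delta_{k,m}$ throughout, reduces everything to $1-Ae^{-s}\geq\left(1-\frac{1}{e}\right)\left(1-A(1-s)\right)$, exploits linearity in $A$ to check only the endpoints, and lands on the classical $1-e^{-s}\geq\left(1-\frac{1}{e}\right)s$ on $[0,1]$. This both avoids citing the external submodular-cover bound (replacing it with the elementary $\log(1-x)\leq-x$ and a concavity-vs-chord comparison) and, more importantly, actually repairs the step the paper glosses over, since the inequality is proved uniformly over all $A\in[0,1]$ rather than only at $A=1$. Your use of the Weierstrass product inequality for the direction $G\leq\tilde{G}$ is also a cleaner substitute for the paper's expectation/Jensen argument with the $\min$ operator. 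One small point worth making explicit when you write this up: the reduction of $1-\min\{1,s\}$ to $1-s$ relies on constraint~(\ref{eq:constraint-appr4}) guaranteeing $s=\sum_{j=0}^{i}\tilde{b}_{k,j}\leq 1$, which you do note; without it the $\min$ would have to be carried through the endpoint analysis.
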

\begin{proof}
Consider any $(\boldsymbol \delta, \tilde{\boldsymbol b})$ that satisfies the constraints in~(\ref{opt:max-gain-relaxed}) and~(\ref{opt:approx-max-gain-relaxed}).

First, we show that $G(\boldsymbol \delta, \tilde{\boldsymbol b})\leq \tilde{G}(\boldsymbol \delta, \tilde{\boldsymbol b}),$ as follows 
\begin{align}
G(\boldsymbol \delta, \tilde{\boldsymbol b})\stackrel{(a)}{=}&\sum_{k\in\mathcal{K}}\sum_{i=0}^{h(k)-1}R_ky_k l_{i, i+1}^k\mathbb{E}\Bigg(1-\prod_{m=i+1}^{h(k)}\delta_{k,m}\prod_{j=0}^i(1-b_{k,j})\Bigg)\nonumber\displaybreak[0]\\
=&\sum_{k\in\mathcal{K}}\sum_{i=0}^{h(k)-1}R_ky_k l_{i, i+1}^k-\sum_{k\in\mathcal{K}}\sum_{i=0}^{h(k)-1}R_ky_k l_{i, i+1}^k\nonumber\displaybreak[1]\\
&\qquad\qquad\qquad\cdot\prod_{m=i+1}^{h(k)}\delta_{k,m}\mathbb{E}\left[\prod_{j=0}^i(1-b_{k,j})\right]\nonumber\displaybreak[2]\\
=&\sum_{k\in\mathcal{K}}\sum_{i=0}^{h(k)-1}R_ky_k l_{i, i+1}^k-\sum_{k\in\mathcal{K}}\sum_{i=0}^{h(k)-1}R_ky_k l_{i, i+1}^k\nonumber\displaybreak[3]\\
&\qquad\qquad\qquad\cdot\prod_{m=i+1}^{h(k)}\delta_{k,m}\mathbb{E}\left[1-\min\left\{1, \sum_{j=0}^ib_{k,j}\right\}\right]\nonumber\\
\stackrel{(b)}{\leq}&\sum_{k\in\mathcal{K}}\sum_{i=0}^{h(k)-1}R_ky_k l_{i, i+1}^k-\sum_{k\in\mathcal{K}}\sum_{i=0}^{h(k)-1}R_ky_k l_{i, i+1}^k\nonumber\\
&\qquad\qquad\qquad\cdot\prod_{m=i+1}^{h(k)}\delta_{k,m}\left(1-\min\left\{1, \mathbb{E}\left[\sum_{j=0}^ib_{k,j}\right]\right\}\right)\nonumber\\
=&\tilde{G}(\boldsymbol \delta, \tilde{\boldsymbol b}),
\end{align}
where the expectation $\mathbb{E}$ in (a) is taken over $\boldsymbol b$ due to the linear relaxation, and (b) holds true due to the concavity of the min operator.

Next, we show that  $G(\boldsymbol \delta, \tilde{\boldsymbol b})\geq \left(1-\frac{1}{e}\right)\tilde{G}(\boldsymbol \delta, \tilde{\boldsymbol b}),$  as follows
\begin{align}\label{compare1}
&G(\boldsymbol \delta, \tilde{\boldsymbol b})=\sum_{k\in\mathcal{K}}\sum_{i=0}^{h(k)-1}R_ky_k l_{i, i+1}^k\Bigg(1-\prod_{m=i+1}^{h(k)}\delta_{k,m}\prod_{j=0}^i(1-\tilde{b}_{k,j})\Bigg)\nonumber\displaybreak[0]\\
&\geq \sum_{k\in\mathcal{K}}\sum_{i=0}^{h(k)-1}R_ky_k l_{i, i+1}^k\Bigg(1-\prod_{j=0}^i(1-\tilde{b}_{k,j})\Bigg)\nonumber\displaybreak[1]\\
&\stackrel{(a)}{\geq} \sum_{k\in\mathcal{K}}\sum_{i=0}^{h(k)-1}R_ky_k l_{i, i+1}^k\left(1-(1-1/i)^i\right)\min\left\{1, \sum_{j=0}^i\tilde{b}_{k,j}\right\}\nonumber\displaybreak[2]\\
&\stackrel{(b)}{\geq} \left(1-\frac{1}{e}\right)\sum_{k\in\mathcal{K}}\sum_{i=0}^{h(k)-1}R_ky_k l_{i, i+1}^k\min\left\{1, \sum_{j=0}^i\tilde{b}_{k,j}\right\},
\end{align}
where (a) holds true since  \cite{cornnejols77, goemans94} 
\begin{align}
1-\prod_{j=0}^i(1-\tilde{b}_{k,j})\geq\left(1-(1-1/i)^i\right)\min\left\{1, \sum_{j=0}^i\tilde{b}_{k,j}\right\},
\end{align}
and (b) holds true since $(1-1/i)^i\leq 1/e$. Also we have
\begin{align}\label{compare2}
&\left(1-\frac{1}{e}\right)\tilde{G}(\boldsymbol \delta, \tilde{\boldsymbol b})=\left(1-\frac{1}{e}\right)\sum_{k\in\mathcal{K}}\sum_{i=0}^{h(k)-1}R_ky_k l_{i, i+1}^k-\left(1-\frac{1}{e}\right)\tilde{L}(\boldsymbol \delta, \tilde{\boldsymbol b})\nonumber\displaybreak[0]\\
=&\left(1-\frac{1}{e}\right)\sum_{k\in\mathcal{K}}\sum_{i=0}^{h(k)-1}R_ky_k l_{i, i+1}^k-\left(1-\frac{1}{e}\right)\sum_{k\in\mathcal{K}}\sum_{i=0}^{h(k)-1}\nonumber\displaybreak[1]\\
&\qquad\qquad\qquad\cdot\prod_{m=i+1}^{h(k)}\delta_{k,m}y_kR_k l_{i, i+1}^k\left(1-\min\left\{1, \sum_{j=0}^i\tilde{b}_{k, j}\right\}\right)\nonumber\displaybreak[2]\\
=&\left(1-\frac{1}{e}\right)\sum_{k\in\mathcal{K}}\sum_{i=0}^{h(k)-1}R_ky_k l_{i, i+1}^k-\left(1-\frac{1}{e}\right)\sum_{k\in\mathcal{K}}\sum_{i=0}^{h(k)-1}y_kR_k l_{i, i+1}^k\nonumber\displaybreak[3]\\
&\qquad\qquad\qquad\qquad\cdot\left(1-\min\left\{1, \sum_{j=0}^i\tilde{b}_{k, j}\right\}\right)\nonumber\\
=&\left(1-\frac{1}{e}\right)\sum_{k\in\mathcal{K}}\sum_{i=0}^{h(k)-1}y_kR_k l_{i, i+1}^k\min\left\{1, \sum_{j=0}^i\tilde{b}_{k, j}\right\},
\end{align}
then from~(\ref{compare1}) and~(\ref{compare2}), we immediately have 
\begin{align}
G(\boldsymbol \delta, \tilde{\boldsymbol b})\geq \left(1-\frac{1}{e}\right)\tilde{G}(\boldsymbol \delta, \tilde{\boldsymbol b}),
\end{align}
therefore, for any $(\boldsymbol \delta, \tilde{\boldsymbol b})$ that satisfies the constraints in~(\ref{opt:max-gain-relaxed}) and~(\ref{opt:approx-max-gain-relaxed}), we have
\begin{align}\label{eq:relations-on-gain}
\left(1-\frac{1}{e}\right)\tilde{G}(\boldsymbol \delta, \tilde{\boldsymbol b})\leq G(\boldsymbol \delta, \tilde{\boldsymbol b})\leq \tilde{G}(\boldsymbol \delta, \tilde{\boldsymbol b}).
\end{align}

Now, since $(\tilde{\boldsymbol \delta}^*, \tilde{\boldsymbol b}^*)$ is optimal to~(\ref{opt:max-gain-relaxed}), then
\begin{align}
G(\boldsymbol \delta^{**}, \tilde{\boldsymbol b}^{**})\leq G(\tilde{\boldsymbol \delta}^*, \tilde{\boldsymbol b}^*).
\end{align}
Similarly, since $(\boldsymbol \delta^{**}, \tilde{\boldsymbol b}^{**})$ is optimal to~(\ref{opt:approx-max-gain-relaxed}), 
\begin{align}
G(\tilde{\boldsymbol \delta}^*, \tilde{\boldsymbol b}^*)\leq \tilde{G}(\tilde{\boldsymbol \delta}^*, \tilde{\boldsymbol b}^*)\leq \tilde{G}(\boldsymbol \delta^{**}, \tilde{\boldsymbol b}^{**})\leq \frac{e}{e-1}G(\boldsymbol \delta^{**}, \tilde{\boldsymbol b}^{**}),
\end{align}
where the first and third inequality hold due to~(\ref{eq:relations-on-gain}).

Therefore, we have 
\begin{align}
\left(1-\frac{1}{e}\right)G(\tilde{\boldsymbol \delta}^*, \tilde{\boldsymbol b}^*)\leq G(\boldsymbol \delta^{**}, \tilde{\boldsymbol b}^{**})\leq G(\tilde{\boldsymbol \delta}^*, \tilde{\boldsymbol b}^*),
\end{align}
i.e.,
\begin{align}
\left(1-\frac{1}{e}\right)G(\tilde{\boldsymbol \delta}^*, \tilde{\boldsymbol b}^*)\leq G(\boldsymbol \tau^*, \boldsymbol u^*)\leq G(\tilde{\boldsymbol \delta}^*, \tilde{\boldsymbol b}^*).
\end{align}
\end{proof}

Since~(\ref{opt:convextransformed}) is a convex optimization problem, $(\boldsymbol \tau^*, \boldsymbol u^*)$ can be obtained in strongly polynomial time.


\subsection{Rounding}\label{sec:rounding}
To provide a constant approximation solution to~(\ref{opt:max-gain}), the optimal solution $(\boldsymbol \delta^{**}, \tilde{\boldsymbol b}^{**})$ needs to be rounded. 

\noindent{\textbf{Property:}} W.l.o.g., we consider a feasible solution $(\boldsymbol \delta, \tilde{\boldsymbol b})$ and assume that there are two fractional solutions $\tilde{b}_{k, j}$ and $\tilde{b}_{k, l}$. We define 
\begin{align}\label{eq:rounding-step}
\epsilon_1=\min\{\tilde{b}_{k, j}, 1-\tilde{b}_{k, l}\},\nonumber\displaybreak[0]\\
\epsilon_2=\min\{1-\tilde{b}_{k, j}, \tilde{b}_{k, l}\},
\end{align}
and set 
\begin{align}\label{eq:rounding-result}
\tilde{\boldsymbol b}^\prime(1)=(\tilde{\boldsymbol b}_{-(j, l)}, \tilde{b}_{k, j}-\epsilon_1, \tilde{b}_{k, l}+\epsilon_1),\nonumber\displaybreak[0]\\
\tilde{\boldsymbol b}^\prime(2)=(\tilde{\boldsymbol b}_{-(j, l)}, \tilde{b}_{k, j}+\epsilon_2, \tilde{b}_{k, l}-\epsilon_2),
\end{align}
where $\tilde{\boldsymbol b}_{-(j, l)}$ means all other components in $\tilde{\boldsymbol b}$ remain the same besides $\tilde{b}_{k, j}$ and $\tilde{b}_{k, l}$.  Set $\tilde{\boldsymbol b}=\tilde{\boldsymbol b}^\prime(1),$ if $G(\tilde{\boldsymbol b}^\prime(1))>G(\tilde{\boldsymbol b}^\prime(2)),$ otherwise set $\tilde{\boldsymbol b}=\tilde{\boldsymbol b}^\prime(2).$ 

\begin{remark}
From the above rounding steps~(\ref{eq:rounding-step}) and~(\ref{eq:rounding-result}), it is clear that $\tilde{\boldsymbol b}^\prime$ has smaller number of fractional components than $\tilde{\boldsymbol b}.$ Since the number of components in $\tilde{\boldsymbol b}$ is finite, the rounding steps will terminate in a finite number of steps.  Also, it is clear that $\tilde{\boldsymbol b}^\prime$ satisfies the second and the fourth constraints in~(\ref{opt:max-gain-relaxed}) and~(\ref{opt:approx-max-gain-relaxed}) for $\forall\epsilon\in[-\epsilon_1, \epsilon_2]$ or $\forall\epsilon\in[-\epsilon_2, \epsilon_1].$
\end{remark}

Now suppose that $(\boldsymbol \delta, \tilde{\boldsymbol b}^\prime)$ is the rounded solution. Then following an argument similar to that in \cite{ageev04}, we have 
\begin{lemma}\label{lem:fea}
For $k\in\mathcal{K}$, if $\sum_{j=1}^{h(k)} b_{k, j}$ is an integer, then $\sum_{j=1}^{h(k)} b_{k, j}^\prime$ is also an integer; if $\sum_{j=1}^{h(k)} b_{k, j}$ is a fraction, then $\left\lfloor\sum_{j=1}^{h(k)} b_{k, j}\right\rfloor\leq \sum_{j=1}^{h(k)} b_{k, j}^\prime\leq\left\lfloor\sum_{j=1}^{h(k)} b_{k, j}\right\rfloor+1.$
\end{lemma}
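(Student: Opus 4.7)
The plan is to isolate the key invariant of the rounding step defined in~\eqref{eq:rounding-step}--\eqref{eq:rounding-result}, namely that it preserves the sum $\sigma_k := \sum_{j=0}^{h(k)} \tilde{b}_{k,j}$ for every leaf $k\in\mathcal{K}$. Once this invariant is established, both conclusions of the lemma will drop out immediately.

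First, I would analyze a single application of the rounding procedure. The step picks two fractional coordinates $\tilde{b}_{k,j}$ and $\tilde{b}_{k,l}$ that share the same leaf-index $k$. In Case~1 these are replaced by $(\tilde{b}_{k,j}-\epsilon_1,\tilde{b}_{k,l}+\epsilon_1)$, and in Case~2 by $(\tilde{b}_{k,j}+\epsilon_2,\tilde{b}_{k,l}-\epsilon_2)$. In either case the sum of the two modified coordinates equals the sum of the originals, and no other coordinate of $\tilde{\boldsymbol b}$ (including coordinates indexed by $k'\ne k$) is touched. Hence $\sigma_k$ is invariant for \emph{every} leaf $k$ under one step.

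Next, I would combine this with the termination argument in the Remark: because each rounding step strictly reduces the number of fractional components of $\tilde{\boldsymbol b}$, only finitely many steps occur before we obtain $\tilde{\boldsymbol b}^\prime$. A straightforward induction on the number of steps then gives
\[
\sum_{j=0}^{h(k)} b_{k,j}^\prime \;=\; \sum_{j=0}^{h(k)} \tilde{b}_{k,j} \qquad \text{for every }k\in\mathcal{K}.
\]
From here the two clauses of the lemma follow at once: if $\sum_j \tilde{b}_{k,j}$ is an integer, the equal quantity $\sum_j b_{k,j}^\prime$ is also an integer; and if $\sum_j \tilde{b}_{k,j}=x$ is fractional, the same identity yields $\sum_j b_{k,j}^\prime=x$, which automatically satisfies $\lfloor x\rfloor \le x \le \lfloor x\rfloor+1$.

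I do not expect a serious obstacle here. The only delicate point is justifying that the rounding step, as written in the paper, always pairs coordinates with the same leaf index $k$ (rather than coordinates across different leaves); this is what makes the sum invariant hold leaf-by-leaf. I would flag this explicitly at the start of the argument and then the rest is routine. The value of the lemma lies not in its proof but in its downstream use, since it is precisely what one needs to verify that the rounded $\boldsymbol b^\prime$ still satisfies the matroid-type constraint~\eqref{eq:constraint4} of the original problem~\eqref{opt:max-gain}.
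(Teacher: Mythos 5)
There is a genuine gap, and it sits exactly at the point you flag and then dismiss as routine. Your central invariant is that each rounding step in~(\ref{eq:rounding-step})--(\ref{eq:rounding-result}) moves mass between two fractional coordinates $\tilde{b}_{k,j}$ and $\tilde{b}_{k,l}$ \emph{of the same leaf} $k$, so that $\sum_j \tilde{b}_{k,j}$ is exactly preserved at every step and hence $\sum_j b'_{k,j}=\sum_j \tilde{b}_{k,j}$ for the final vector. But the final vector $\tilde{\boldsymbol b}^{\prime}$ is integral, so its per-leaf sum is an integer; if exact preservation held throughout, the second clause of the lemma (``if $\sum_j b_{k,j}$ is a fraction\ldots'') could never be invoked, since a fractional sum would have to equal an integer. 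This internal tension shows that your invariant cannot be the whole story: whenever the per-leaf sum is fractional, at some step leaf $k$ is left with exactly \emph{one} fractional coordinate, same-leaf pairing is impossible, and that coordinate must be rounded to $0$ or $1$ (alone, or by pairing with a fractional coordinate of a \emph{different} leaf sharing a capacity constraint~(\ref{eq:constraint-appr3})). In that step the sum changes, and one must argue separately that it moves from $n+f$ (with $n$ integer, $f\in(0,1)$) to either $n$ or $n+1$, i.e., stays in $\left[\lfloor n+f\rfloor,\,\lfloor n+f\rfloor+1\right]$. That missing case is precisely what the second clause of the lemma is designed to cover, so your argument proves the first clause but not the second.

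For comparison: the paper does not actually prove this lemma at all --- it states that the claim follows ``by an argument similar to that in \cite{ageev04}'' and refers the reader there. So you are supplying an argument where the paper supplies a citation; your same-leaf exact-preservation observation is the right first half of that argument (and is what Ageev--Sviridenko's pipage analysis gives when both paired variables lie in the same group), but a complete proof must handle the odd-one-out fractional coordinate per leaf, which is where the floor bounds, rather than an equality, genuinely enter.
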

We refer the interested reader to \cite{ageev04} for more details. 

Now since the energy constraint is integer, given that $(w_{ca}T+(R_k-1)\varepsilon_{kT})\leq 1$, Lemma~(\ref{lem:fea}) implies that $(\boldsymbol \delta, \tilde{\boldsymbol b}^\prime)$ satisfies the constraints in~(\ref{opt:max-gain-relaxed}). Therefore, after the rounding, we obtain a feasible solution obeying the constraints in~(\ref{opt:max-gain-relaxed}).

\begin{theorem}
We consider a feasible solution $(\boldsymbol \delta, \tilde{\boldsymbol b})$ and assume that there are two fractional solutions $\tilde{b}_{k, j}$ and $\tilde{b}_{k, l}$. W.l.o.g., we assume that $\tilde{\boldsymbol b}^\prime=(\tilde{\boldsymbol b}_{-(j, l)}, \tilde{b}_{k, j}-\epsilon, \tilde{b}_{k, l}+\epsilon)$ following rounding steps~(\ref{eq:rounding-step}) and~(\ref{eq:rounding-result}), then $G(\cdot)$ is convex in $\epsilon$.
\end{theorem}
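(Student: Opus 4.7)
The plan is to substitute the one-parameter perturbation $\tilde b_{k,j}\mapsto \tilde b_{k,j}-\epsilon$, $\tilde b_{k,l}\mapsto \tilde b_{k,l}+\epsilon$ directly into the closed-form expression for $G(\boldsymbol\delta,\tilde{\boldsymbol b})$ in~(\ref{eq:relaxed-latency-gain}), and to show that the second derivative with respect to $\epsilon$ is nonnegative. Since $L^u$ is constant and $G = L^u - L$, it suffices to prove that $L(\boldsymbol\delta,\tilde{\boldsymbol b})$ is a \emph{concave} function of $\epsilon$. Only terms indexed by the leaf $k$ (in the outer sum of $L$) are affected; $\boldsymbol\delta$ is held fixed throughout.

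Assume without loss of generality that $j<l$. I would then partition the affected inner-sum terms (over $i=0,\dots,h(k)-1$) according to how the perturbed variables appear in $\prod_{j'=0}^{i}(1-\tilde b_{k,j'})$:
\begin{itemize}
\item If $i<j$, neither factor appears, and the term is constant in $\epsilon$.
\item If $j\le i<l$, only $(1-\tilde b_{k,j})$ appears, and after substitution the term is affine in $\epsilon$.
\item If $i\ge l$, both factors appear, and
\[
(1-\tilde b_{k,j}+\epsilon)(1-\tilde b_{k,l}-\epsilon) \;=\; \alpha_i+\beta_i\epsilon-\epsilon^2,
\]
where $\alpha_i,\beta_i$ depend on $\tilde b_{k,j},\tilde b_{k,l}$ but not on $\epsilon$.
\end{itemize}

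Multiplying the quadratic in the third case by the leading coefficient
\[
C_{k,i}\;:=\;R_k y_k\, l_{i,i+1}^k\Bigl(\textstyle\prod_{m=i+1}^{h(k)}\delta_{k,m}\Bigr)\!\!\prod_{\substack{0\le j'\le i\\ j'\ne j,\,l}}\!\!(1-\tilde b_{k,j'}),
\]
I observe $C_{k,i}\ge 0$ because $\delta_{k,m}\in(0,1]$, $R_k,y_k,l_{i,i+1}^k\ge 0$, and feasibility gives $\tilde b_{k,j'}\in[0,1]$. Summing contributions yields
\[
\frac{d^2 L}{d\epsilon^2} \;=\; -2\!\!\sum_{i\ge l,\, i\le h(k)-1}\!\! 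C_{k,i}\;\le\;0,
\]
so $L$ is concave in $\epsilon$ on the interval on which the rounding step is defined, and therefore $G = L^u - L$ is convex in $\epsilon$.

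The argument is almost entirely bookkeeping; the only substantive step is verifying that each contribution of order $\epsilon^2$ has a nonpositive coefficient, which I expect to be the main place to be careful, since one must confirm that the remaining product $\prod_{j'\ne j,l}(1-\tilde b_{k,j'})$ (together with the $\delta$-factors) is genuinely nonnegative on the feasible region. Once that sign check is in hand, concavity of $L$ and hence convexity of $G$ follow immediately. Note also that this convexity is precisely what justifies the rounding rule of selecting the better endpoint among $\tilde{\boldsymbol b}'(1)$ and $\tilde{\boldsymbol b}'(2)$ in~(\ref{eq:rounding-result}): a convex function on an interval attains its maximum at an endpoint, so the rounding never decreases $G$.
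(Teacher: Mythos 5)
Your proposal is correct and follows essentially the same route as the paper: substitute the one-parameter perturbation into the closed form of $G$ (equivalently $L$) and check the second-order condition, observing that the only $\epsilon^2$ contribution comes from terms where both perturbed factors appear in $\prod_{j'=0}^{i}(1-\tilde b_{k,j'})$ and carries a sign that makes $G$ convex. The paper states this in one line as ``obvious by the second order condition''; your case split over $i<j$, $j\le i<l$, and $i\ge l$ and the explicit nonnegativity check of the coefficient $C_{k,i}$ simply make that step rigorous.
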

\begin{proof}
Recall that 
\begin{align*}
G(\boldsymbol \delta, \tilde{\boldsymbol b})=\sum_{k\in\mathcal{K}}\sum_{i=0}^{h(k)-1}R_ky_k l_{i, i+1}^k\Bigg(1-\prod_{m=i+1}^{h(k)}\delta_{k,m}\prod_{j=0}^i(1-\tilde{b}_{k,j})\Bigg),
\end{align*}
then
\begin{align*}
G(\boldsymbol \delta, \tilde{\boldsymbol b}^\prime, \epsilon)&=\sum_{k\in\mathcal{K}}\sum_{i=0}^{h(k)-1}R_ky_k l_{i, i+1}^k\Bigg(1-\prod_{m=i+1}^{h(k)}\delta_{k,m}\prod_{j^\prime\neq j, l}^i(1-\tilde{b}_{k,j^\prime})\nonumber\displaybreak[0]\\
&\qquad\qquad\qquad\qquad\qquad\cdot(1-\tilde{b}_{k,j}+\epsilon)(1-\tilde{b}_{k,l}-\epsilon)\Bigg),
\end{align*}
by the second order condition, it is obvious that $G(\cdot)$ is convex in $\epsilon.$ This property is called $\epsilon$-convexity property in \cite{ageev04}.
\end{proof}

\begin{corollary} 
Since $G(\cdot)$ is convex in $\epsilon$, it should achieve its maximum at the endpoint of $[-\epsilon_1, \epsilon_2]$ or $\epsilon\in[-\epsilon_2, \epsilon_1].$ Therefore, following the above rounding steps~(\ref{eq:rounding-step}) and~(\ref{eq:rounding-result}), we have $G(\boldsymbol \delta, \tilde{\boldsymbol b}^\prime)\geq G(\boldsymbol \delta, \tilde{\boldsymbol b})$.
\end{corollary}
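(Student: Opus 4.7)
The plan is to combine the convexity of $G(\cdot)$ in $\epsilon$ just established by the preceding theorem with the elementary fact that a convex function on a compact interval attains its maximum at one of the two endpoints. This suffices because the rounding rule in~(\ref{eq:rounding-result}) is precisely a ``pick the better endpoint'' rule.

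First, I would reinterpret the rounding step as a one-parameter family of perturbations: fix the pair of fractional coordinates $\tilde{b}_{k,j}, \tilde{b}_{k,l}$ and define $\varphi(\epsilon):=G(\boldsymbol \delta, (\tilde{\boldsymbol b}_{-(j,l)}, \tilde{b}_{k,j}-\epsilon, \tilde{b}_{k,l}+\epsilon))$. By the construction of $\epsilon_1, \epsilon_2$ in~(\ref{eq:rounding-step}), the two perturbed coordinates remain in $[0,1]$ exactly when $\epsilon\in[-\epsilon_2,\epsilon_1]$, and the Remark following~(\ref{eq:rounding-result}) already observed that the per-leaf sum constraint and the cache capacity constraint continue to hold throughout this interval, so $\varphi$ is well-defined on the feasible set there. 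In particular, $\epsilon=0$ corresponds to the unperturbed iterate $\tilde{\boldsymbol b}$ and lies in the interior of this compact interval, whereas the endpoints $\epsilon=\epsilon_1$ and $\epsilon=-\epsilon_2$ correspond to $\tilde{\boldsymbol b}^\prime(1)$ and $\tilde{\boldsymbol b}^\prime(2)$, respectively.

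Next, I would invoke the convexity of $\varphi$ on $[-\epsilon_2,\epsilon_1]$ guaranteed by the preceding theorem. Any convex function $\varphi$ on $[a,b]$ satisfies $\varphi(t)\leq\max\{\varphi(a),\varphi(b)\}$ for every $t\in[a,b]$, which is a one-line consequence of writing $t$ as a convex combination of the endpoints and then applying Jensen's inequality in its elementary two-point form. Applied at $t=0$, this gives $G(\boldsymbol \delta,\tilde{\boldsymbol b})=\varphi(0)\leq \max\{\varphi(-\epsilon_2),\varphi(\epsilon_1)\}=\max\{G(\boldsymbol \delta,\tilde{\boldsymbol b}^\prime(1)), G(\boldsymbol \delta,\tilde{\boldsymbol b}^\prime(2))\}$.

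Finally, since the rounding rule in~(\ref{eq:rounding-result}) sets $\tilde{\boldsymbol b}^\prime$ to whichever of $\tilde{\boldsymbol b}^\prime(1)$ or $\tilde{\boldsymbol b}^\prime(2)$ yields the larger $G$-value, the quantity $G(\boldsymbol \delta,\tilde{\boldsymbol b}^\prime)$ equals the right-hand side above, and the claimed inequality $G(\boldsymbol \delta,\tilde{\boldsymbol b}^\prime)\geq G(\boldsymbol \delta,\tilde{\boldsymbol b})$ follows immediately. There is no genuine obstacle here; the only detail requiring care is the bookkeeping step that lines up the two endpoints of the admissible $\epsilon$-interval with the two candidate rounded vectors, so that the endpoint-maximum principle for convex functions translates directly into the stated monotone improvement of the objective under rounding.
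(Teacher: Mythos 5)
Your proposal is correct and follows essentially the same route as the paper: the paper's one-line proof invokes exactly the convexity of $G$ in $\epsilon$ together with the endpoint-maximum property implicit in the rounding rule, and you have simply spelled out the bookkeeping (identifying $\epsilon=0$ with $\tilde{\boldsymbol b}$ and the endpoints $\epsilon=\epsilon_1$, $\epsilon=-\epsilon_2$ with $\tilde{\boldsymbol b}^\prime(1)$, $\tilde{\boldsymbol b}^\prime(2)$) that the paper leaves implicit. No gap.
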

\begin{proof}
$G(\boldsymbol \delta, \tilde{\boldsymbol b}^\prime)\geq G(\boldsymbol \delta, \tilde{\boldsymbol b})$ follows directly from the convexity of $G(\cdot)$ in $\epsilon$ and the rounding steps in~(\ref{eq:rounding-step}) and~(\ref{eq:rounding-result}).
\end{proof}

\noindent{\textbf{Rounding Scheme:}}  Now for any solution $(\boldsymbol \delta, \tilde{\boldsymbol b})$ that satisfies the constraints in~(\ref{opt:max-gain-relaxed}) and~(\ref{opt:approx-max-gain-relaxed}), where $\tilde{\boldsymbol b}$ contains fractional terms.  There always exists a way to transfer mass between any two fractional variables $\tilde{b}_{k, j}$ and $\tilde{b}_{k, l}$ such that 
\begin{itemize}
\item (i) at least one of them becomes $0$ or $1$; 
\item (ii) the resultant solution $(\boldsymbol \delta, \tilde{\boldsymbol b}^\prime)$ is feasible, i.e., $(\boldsymbol \delta, \tilde{\boldsymbol b}^\prime)$ satisfy the constraints in~(\ref{opt:max-gain-relaxed}) and~(\ref{opt:approx-max-gain-relaxed}); 
\item (iii) the gain satisfies $G(\boldsymbol \delta, \tilde{\boldsymbol b}^\prime)\geq G(\boldsymbol \delta, \tilde{\boldsymbol b}).$
\end{itemize}

Then we can obtain an integral solution with the following iterative algorithm:
\begin{enumerate}
\item Given the optimal solution $(\boldsymbol \tau^*, \boldsymbol u^*)$ to~(\ref{opt:convextransformed}) , we first obtain the optimal solution $(\boldsymbol \delta^{**}, \tilde{\boldsymbol b}^{**})$ through the convexity mapping defined in~(\ref{newvariables}).
\item If there are fractional solutions in $\tilde{\boldsymbol b}^{**}$,  the number of fractional solutions must be at least two since the capacities are integer. W.l.o.g., consider two fractional solutions $\tilde{b}_{k, j}^{**}$ and $\tilde{b}_{k, l}^{**},$ for $j, l\in\{1, \cdots, h(k)\}$ and $j\neq l.$
\item Following the above properties (i) (ii) and (iii) to transform at least one of them into $0$ or $1$ and the resultant gain $G$ is increased. 
\item Repeat steps $2$ and $3$ until there are no fractional solutions in $\tilde{\boldsymbol b}^{**}.$ 
\end{enumerate}

Denote the resultant solution as $(\boldsymbol \delta^{**}, \tilde{\boldsymbol b}^{**\prime})$ which satisfies the constraints in~(\ref{opt:max-gain}). Note that each step can round at least one fractional solution to an integer one, the above iterative algorithm can terminate at most in $|\mathcal{K}|\times \sum_{k\in\mathcal{K}}|h(k)|$ steps. As each rounding step increases the gain,  we have 
\begin{align}
G(\boldsymbol \delta^{**}, \tilde{\boldsymbol b}^{**\prime})\geq G(\boldsymbol \delta^{**}, \tilde{\boldsymbol b}^{**})&\stackrel{(a)}{\geq}\left(1-\frac{1}{e}\right)G(\tilde{\boldsymbol \delta}^*, \tilde{\boldsymbol b}^*)\nonumber\displaybreak[0]\\
&\stackrel{(b)}{\geq}\left(1-\frac{1}{e}\right)G(\boldsymbol \delta^*, \boldsymbol b^*),
\end{align}
where (a) holds from Theorem~\ref{thm:relation-relaxed-opt-convexed-opt} and (b) holds from Theorem~\ref{thm:relation-oropt-relaxed-opt}. Therefore, we have obtained a $\left(1-1/e\right)$-approximation solution to the original optimization problem~(\ref{opt:max-gain}).

\section{Performance Evaluation}\label{sec:numerical}

We evaluate the performance of our proposed algorithm against benchmarks over synthetic {data}-based network topologies.

\begin{table}[ht]
	\vspace{-0.1in}
	\centering
	\caption{Characteristics of the Online Solvers}
	\vspace{-0.1in}
	\begin{tabular}{|l|p{6.0cm}|}
		\hline
		\textbf{Solver} & \textbf{Characteristics}  \\ \hline
		\textbf{Bonmin} \cite{bonami08} &  A deterministic approach based on Branch-and-Cut method that solves relaxation problem with Interior Point Optimization tool (IPOPT), as well as mixed integer problem with Coin or Branch and Cut (CBC).  \\ \hline
		\textbf{NOMAD} \cite{le11}  & A stochastic approach based on Mesh Adaptive Direct Search Algorithm (MADS) that guarantees local optimality. It can be used to solve non-convex MINLP.\\ \hline
		\textbf{GA} \cite{deb02} &  A meta-heuristic stochastic approach that can be tuned to solve global optimization problems.      \\ \hline
	\end{tabular}
	\vspace{-0.1in}
	\label{tab:Solvercharacteristics}
\end{table}

\subsection{Benchmarks}
To compare our proposed solution technique with existing ones, we solve the original  non-convex mixed integer non-linear optimization (MINLP) in~(\ref{gain}) using conventional online solvers, including  Bonmin \cite{bonami08}, NOMAD \cite{le11} and Genetic Algorithm (GA) \cite{deb02}, which have all been designed to solve classes of MINLP problems. The characteristics of these solvers are given in Table~\ref{tab:Solvercharacteristics}.

Note that GA is a stochastic approach whose performance greatly varies from one simulation run to other. In order to reduce the variance, we run the algorithm  10 times and provide the average, maximum and minimum time along with objective function value obtained using GA. For sake of comparison, we also ran our algorithm 10 times.   For our proposed algorithm, we use Algorithm~\ref{algo:alg} to solve the approximate relaxed convex problem and then use the rounding scheme discussed in Section~\ref{sec:rounding} to obtain a feasible solution to the original problem. We compare the performance of our proposed algorithm with these benchmarks with respect to average latency as well as the complexity (measured in time).

%
%
%

\begin{table}[ht]
	\vspace{-0.1in}
	\centering
	\caption{Parameters Used in Simulations}
	\vspace{-0.1in}
	\begin{tabular}{|c|c||c|c|}
		\hline
		\textbf{Parameter} & \textbf{Value} &\textbf{Parameter} & \textbf{Value} \\ \hline
		$y_k$	& $100$ &  $\varepsilon_{vR}$ &  $50$  $\times$ $10^{-9}$ J           \\ \hline
		$R_k$	& $1000$    & $\varepsilon_{vT}$&   $200$  $\times$ $10^{-9}$ J              \\ \hline
		$w_{ca}$	& $1.88$ $\times$ $10^{-6}$&  $\varepsilon_{cR}$ &  $80$  $\times$ $10^{-9}$ J             \\ \hline
		$T$	& $10$s   &$l$ &$0.6$               \\ \hline
		$S_v$ &$120$ & $W$& $200$  \\ \hline
	\end{tabular}
	\vspace{-0.2in}
	\label{tab:algcompareparameters}
\end{table}

\subsection{Synthetic Evaluation}
\subsubsection{Simulation Setting}
We consider binary tree networks with $7, 15, 31$ and $63$ nodes, respectively.  We assume that each leaf node generates $y_k=100$ data items\footnote{Note that this can be equivalent taken as $100$ sensors generate data}, which will be requested $R_k=1000$ times during a time period $T=10s.$ $S_v=120$ is the storage capacity of each node. For simplicity, we assume that the latency along each edge in the network is identical and take $l=0.6.$ Our simulation parameters are provided in Table~\ref{tab:algcompareparameters}, which are typical values used in the literature \cite{nazemi16,heinzelman00,ye02}.  We implement Bonmin, NOMAD and Algorithm~\ref{algo:alg} in Matlab using OPTI-Toolbox and Matlab's built-in GA algorithm on a Windows $7$ $64$ bits,  $3.40$ GHz Intel Core-i$7$ Processor with a $16$ GB memory.

\begin{table*}[]
	\centering
	\caption{Comparison Among Selected Algorithms Using Synthetic Data for Various Network Topologies}
	\vspace{-0.1in}
	\label{tab:algresults}
	\begin{tabular}{|l|l|l|l|l|l|l|l|l|}
		\hline
		\multicolumn{1}{|c|}{\multirow{2}{*}{\textbf{Nodes}}} & \multicolumn{2}{c|}{\textbf{Proposed}}                                             & \multicolumn{2}{c|}{\textbf{GA}}                                              & \multicolumn{2}{c|}{\textbf{Nomad}}                                           & \multicolumn{2}{c|}{\textbf{Bonmin}}                                          \\ \cline{2-9} 
		\multicolumn{1}{|c|}{}                                & \multicolumn{1}{c|}{\textbf{Obj. Value}} & \multicolumn{1}{c|}{\textbf{Time(s)}} & \multicolumn{1}{c|}{\textbf{Obj. Value}} & \multicolumn{1}{c|}{\textbf{Time(s)}} & \multicolumn{1}{c|}{\textbf{Obj. Value}} & \multicolumn{1}{c|}{\textbf{Time(s)}} & \multicolumn{1}{c|}{\textbf{Obj. Value}} & \multicolumn{1}{c|}{\textbf{Time(s)}} \\ \hline
		$\textbf{7}$                                            & $480000$                                   & $3.30$                               & $479820$                                   & $273.29$                             & $\text{Infeasible}$                                      & $9.45$                               & $\text{Infeasible}$                                      & $1.01$                               \\ \hline
		$\textbf{15}$                                           & $1440000$                                  & $6.33$                              & $1440000$                                  & $15.12$                              & $1439900$                                  & $16.18$                              & $\text{Infeasible}$                                       & $>4000$                   \\ \hline
		$\textbf{31}$                                           & $3840000$                                  & $29.28$                             & $3839000$                                  & $3501.10$                            & $\text{Non-Convergence}$                                       & $98.90$                              & $\text{Non-Convergence}$                                          & $1232.31$                            \\ \hline
		$\textbf{63}$                                           & $9599900
		$                                  & $538.17$                            & $8792100$                                  & $158.56$                             & $\text{Non-Convergence}$                                         & $966.16$                            & $\text{Non-Convergence}$                                          & $2.04 $                              \\ \hline
	\end{tabular}
	\vspace{-0.10in}
\end{table*}

\begin{table*}[]
	\centering
	\caption{Detailed Results for Our Proposed Algorithm}
	        \vspace{-0.1in}
		\begin{tabular}{|l|l|l|l|l|l|l|l|l|l|}
		\hline
		\multirow{2}{*}{\textbf{Node}} & \multicolumn{3}{c|}{\textbf{Time}}                                                                            & \multicolumn{3}{c|}{\textbf{Obj. Value}}                                                                      & \multicolumn{3}{c|}{\textbf{Iterations}}                                                                      \\ \cline{2-10} 
		& \multicolumn{1}{c|}{\textbf{Max.}} & \multicolumn{1}{c|}{\textbf{Min.}} & \multicolumn{1}{c|}{\textbf{Average}} & \multicolumn{1}{c|}{\textbf{Max.}} & \multicolumn{1}{c|}{\textbf{Min.}} & \multicolumn{1}{c|}{\textbf{Average}} & \multicolumn{1}{c|}{\textbf{Max.}} & \multicolumn{1}{c|}{\textbf{Min.}} & \multicolumn{1}{c|}{\textbf{Average}} \\ \hline
		$\textbf{7} $                    & $3.5848$                            & $3.12$                              & $3.30 $                                 & $480000$                            & $480000$                            & $480000$                                & $4$                                 & $4$                                 & $4$                                     \\ \hline
		$\textbf{15} $                   & $7.23$                              & $6.00$                              & $6.33$                                  & $1440000$                           & $1440000$                           & $1440000$                               & 2                                 & $2$                                 & $2$                                     \\ \hline
		$\textbf{31}$                    & $30.99$                             & $28.57$                             & $29.28 $                                & $3840000$                           & $3839900$                           & $3840000$                               & $2$                                 & $2$                                 & $2$                                     \\ \hline
		\textbf{63}                    & $553.94 $                           & $531.61$                            & $538.17$                                & $9600000$                           & $9599900$                           & $9599900$                               &$ 3$                                 & $3$                                 & $3$                                     \\ \hline
	\end{tabular}
\label{tab:ourdetails}
\vspace{-0.10in}
\end{table*}

\begin{table*}[ht]
	\centering
	\caption{Robustness of GA Algorithm}
	\vspace{-0.1in}
	\label{tab:garesults}
\begin{tabular}{|l|l|l|l|l|l|l|l|}
	\hline
	\multirow{2}{*}{\textbf{Node}} & \multicolumn{3}{c|}{\textbf{Time (s)}} & \multicolumn{3}{c|}{\textbf{Objective Value}} & \multicolumn{1}{c|}{\multirow{2}{*}{\textbf{Convergence (\%)}}} \\ \cline{2-7}
	& \textbf{Max.}         & \textbf{Min.}         & \textbf{Average}    & \textbf{Max.}         & \textbf{Min.}         & \textbf{Average}     & \multicolumn{1}{c|}{}                                           \\ \hline
$	\textbf{7} $                    & $369.43$      & $161.76 $     & $273.29$     & $479880$       & $479750 $     & $479820 $     & $100 $                                                            \\ \hline
$	\textbf{15} $                   & $18.15 $      & $12.56 $      &$ 15.12$      & $1440000$      & $1440000 $    & $1440000$     & $100  $                                                           \\ \hline
$	\textbf{31}$                    & $4446.70$     & $2552.40 $    & $3501.10$    & $3839100$      & $3838900 $    & $3839000  $   &$ 100$                                                             \\ \hline
$	\textbf{63} $                   & $413.28 $    & $24.41$       & $158.56 $    & $9599100$      & $8041500$     & $8792100$     & $40 $                                                             \\ \hline
\end{tabular}
\vspace{-0.10in}
\end{table*}

\subsubsection{Evaluation Results}
The performance of these algorithms with respect to the obtained value of the objective function and the time needed to obtain it, are given in Table~\ref{tab:algresults}.

On the one hand, we observe that neither Bonmin or NOMAD provide any feasible solution with the constraint in~(\ref{eq:constraint4}).  We then further relax this constraint for Bonmin and NOMAD. Hence, the results provided in Table~\ref{tab:algresults} for Bonmin and NOMAD are solved without constraint~(\ref{eq:constraint4}).   Again, we notice that even after relaxing the constraint, Bonmin and NOMAD still exhibit a poor performance, i.e. they either providing an infeasible solution or not converging to a feasible solution. This is mainly due to the hardness of the original non-convex MINLP~(\ref{gain}).  Hence, it is important to provide an efficient approximation algorithm to solve it. 

On the other hand, we observe that both our proposed algorithm and GA provide encouraging results.  We run both GA and our algorithm $10$ times and report their average as the obtained solutions and run time in Table~\ref{tab:algresults}.  Tables~\ref{tab:ourdetails} and~\ref{tab:garesults} provide detailed results for both algorithms, respectively.  It is clear that our proposed algorithm significantly outperforms these conventional online solvers both in terms of run time and the obtained objective function value.  

In particular, for the $63$ nodes network, GA provides an output solution faster than ours. However, GA is not robust and reliable for larger networks. We characterize the robustness of GA, as shown in Table~\ref{tab:garesults}, where the maximal (Max.), minimal (Min.) and average values of the objective function are presented as well as the corresponding time to obtain them for $7, 15, 31$ and $63$ nodes binary tree networks, respectively. We notice that for the $63$ nodes network, only $4$ out of $10$ runs converge to a feasible solution using GA. Therefore, GA cannot always guarantee a feasible solution though it may complete in less time. Table~\ref{tab:ourdetails} provides a detailed overview of our algorithm. The maximal, minimal and average values in terms of time, obtained solution and number of iterations are given. Our master-slave algorithm can converge to a solution in small number of iterations. 

Also note that our proposed approach always achieves a feasible solution within the $(1-1/e)$ approximation of the optimal solution\footnote{Note that the original optimization problem~(\ref{gain}) is non-convex MINLP, which {is} NP-hard.  Bonmin, NOMAD and GA all claim to solve MINLP with a $\epsilon$-optimal solution.  {However, GA and NOMAD are stochastic approaches, they cannot guarantee $\epsilon$-global optimality.}  Hence, we compare our solution with these of Bonmin, NOMAD and GA to verify the approximation ratio.}\footnote{$\epsilon$-global optimality means that the obtained solution is within $\epsilon$ tolerance of the global optimal solution.}.  Therefore, our proposed algorithm can efficiently solve the problem, i.e., providing a feasible solution in a reasonable time and is robust to network topologies changes.

We also characterize the impact of number of requests on the caching and compression gain, shown in Figure~\ref{fig:objvsr}.  We observe that as the number of requests increase, the gain increases, as reflected in the objective function~(\ref{gain}). Since the objective function~(\ref{gain}) is monotonically increasing in the number of requests $R_k$ for all $k\in\mathcal{K}$ provided that $\boldsymbol\delta$ and $\boldsymbol b$ are fixed.

\begin{figure}
\centering
\includegraphics[width=1\linewidth]{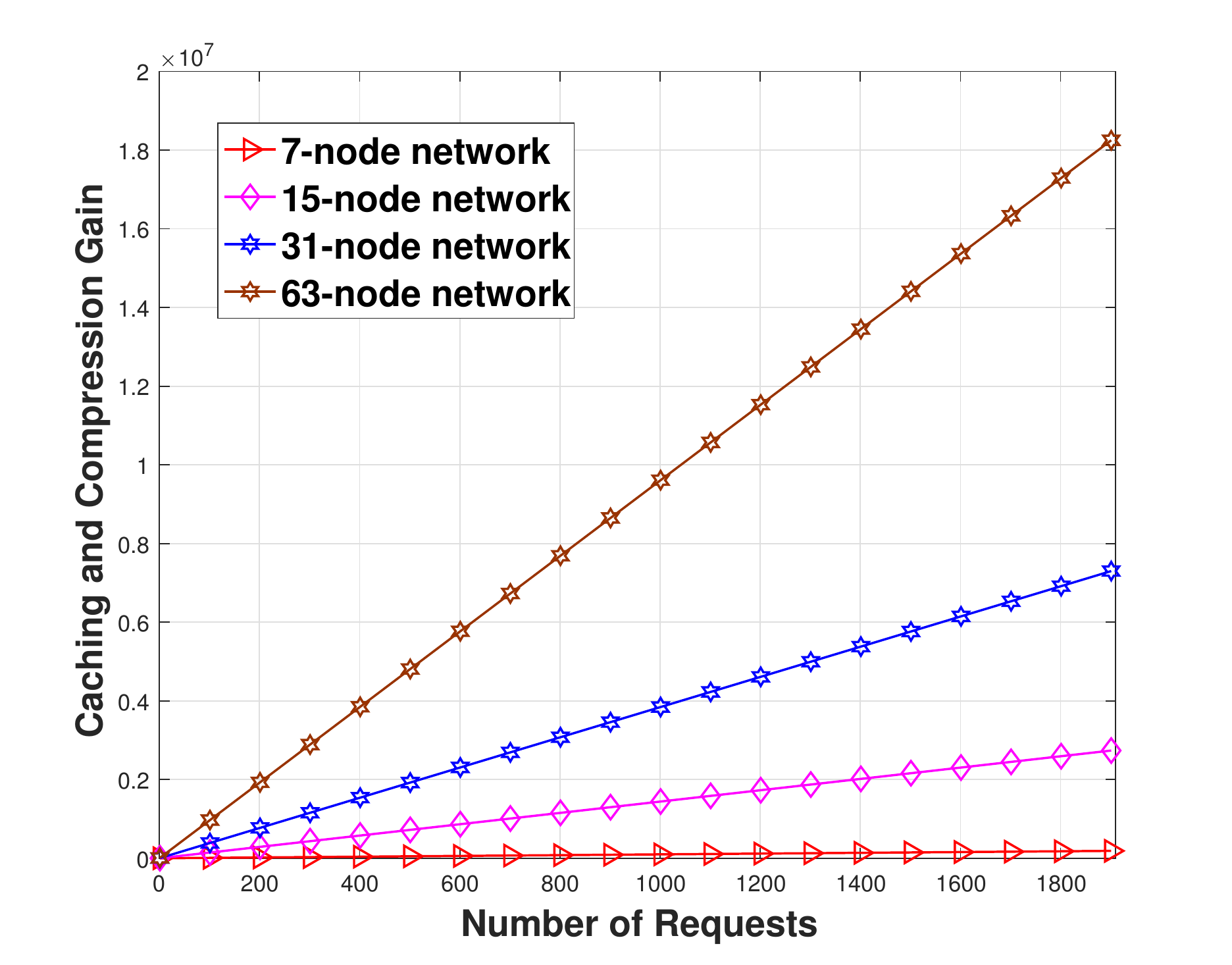}
\vspace{-0.3in}
\caption{Impact of number of requests on the performance.}
\label{fig:objvsr}
\vspace{-0.25in}
\end{figure}

\begin{remark}
Throughout the evaluations, we notice that the compression ratio at leaf node is much smaller than the ratio at root node.  For example, in the $63$ nodes network, the compression ratio{\footnote{{Defined as the ratio of the volume of the output data to that of the input data at a node. The higher the compression ratio is, the lower is the {data compression}.}}} at leaf node is $0.01$ while it is $0.37$ at the root node.  This captures the tradeoff between the costs of compression, communication and caching in our optimization framework.   Similar observations can be made in other networks and hence are omitted here. 
\end{remark}


\subsubsection{Heterogeneous Networks}\label{sec:numerical2}


{In the previous section, we consider binary tree networks under homogeneous settings, i.e., the value of different parameters are identical for all nodes in the network, as given in Table~\ref{tab:algcompareparameters}.  In this section, we generalize the simulation setting from two perspectives: (i) First, we consider heterogeneous parameter values across the network.  For example, for the node cache capacity $S_v,$ we assume that $S_v=100{+rand(1,20)}$, where $rand(i,j)$ assigns a random number between $i$ and $j$.  Similarly, we assign a random number to $\varepsilon_{vR},$ $\varepsilon_{vT}$ and $\varepsilon_{cR}$ on each node;  (ii) Second, instead of considering binary tree, we consider more general network topologies with $7, 15, 31$ and $67$ nodes, as shown in Figure~\ref{fig:networks}.}

\begin{figure}
\centering
\includegraphics[width=1\linewidth]{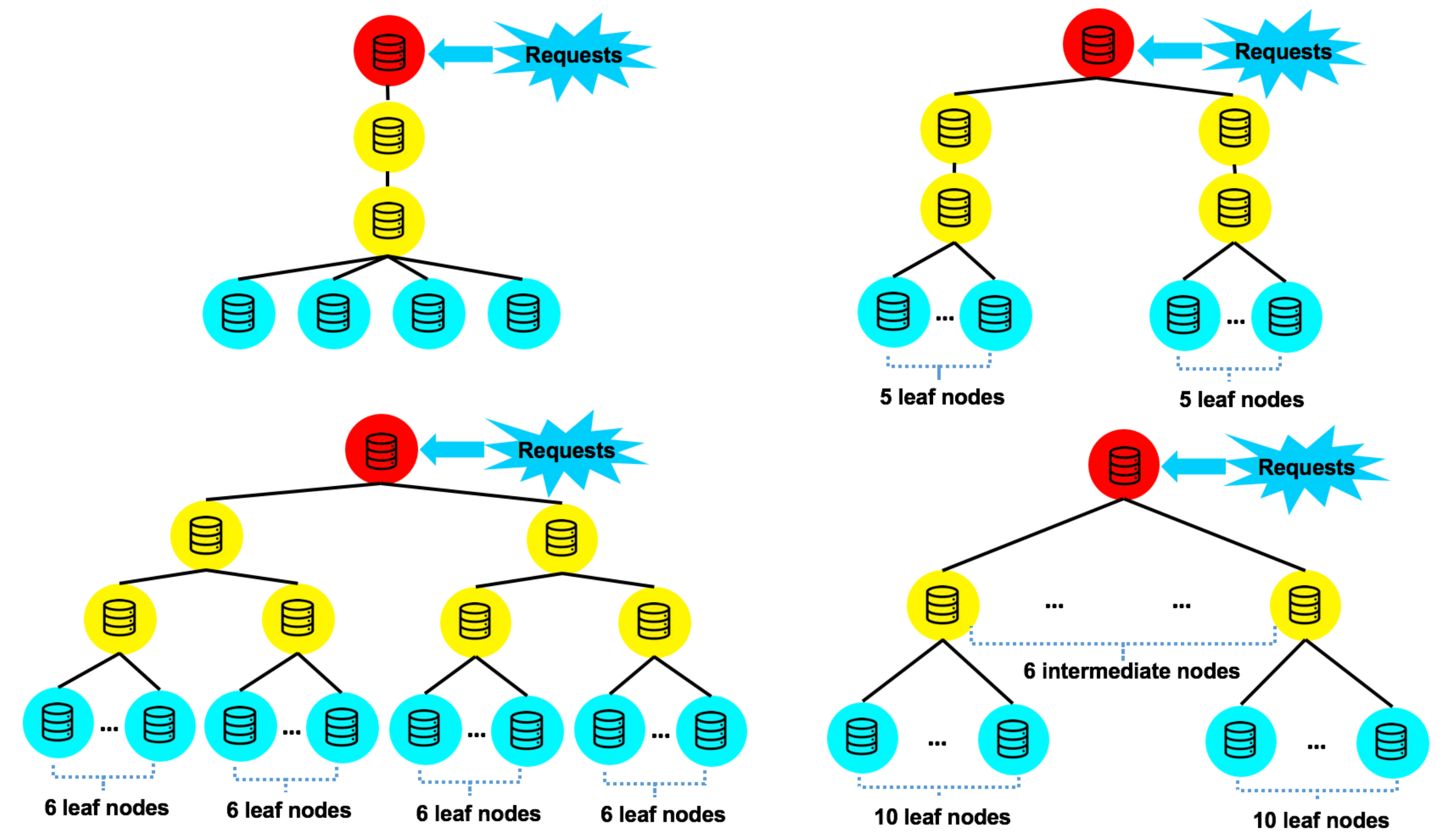}
\vspace{-0.3in}
\caption{Heterogeneous Tree Networks used in Simulations}
	\label{fig:networks}
\vspace{-0.2in}
\end{figure}

{The performance of these algorithms with respect to the obtained value of the objective function and the time needed to obtain it, are given in Table~\ref{tab:algresults-new}.  Again, we observe that neither Bonmin nor NOMAD can effectively solve the original problem in ~(\ref{gain}), which shows the hardness of the problem. Hence, it is important to provide an efficient approximation algorithm to solve it. }

{Similarly, we also observe that both our proposed algorithm and GA provide encouraging results.  We run both GA and our algorithm $10$ times and report their average as the obtained solutions and run time in {Table~\ref{tab:algresults-new}}.  
	Tables~\ref{tab:ourdetailshet} and~\ref{tab:garesultshet} provide detailed results for both algorithms, respectively in the heterogeneous setting.  
 It is clear that our proposed algorithm significantly outperforms these conventional online solvers both in terms of run time and the obtained objective function value.  Furthermore, again we notice that GA cannot always guarantee a feasible solution. }

{We also characterize the impact of number of requests on the caching and compression gain as shown in Figure~\ref{fig:objvsr2}.  Similar to Figure~\ref{fig:objvsr}, we observe that as the number of requests increase, the gain increases. }

\begin{table*}[]
	\centering
	\caption{{Comparison Among Selected Algorithms Using Synthetic Data for Various Network Topologies II}}
	\vspace{-0.1in}
	\label{tab:algresults-new}
	\taburulecolor{blue}
	\begin{tabular}{|l|l|l|l|l|l|l|l|l|}
		\hline
		\multicolumn{1}{|c|}{\multirow{2}{*}{\textbf{Nodes}}} & \multicolumn{2}{c|}{\textbf{Proposed}}                                             & \multicolumn{2}{c|}{\textbf{GA}}                                              & \multicolumn{2}{c|}{\textbf{Nomad}}                                           & \multicolumn{2}{c|}{\textbf{Bonmin}}                                          \\ \cline{2-9} 
		\multicolumn{1}{|c|}{}                                & \multicolumn{1}{c|}{\textbf{Obj. Value}} & \multicolumn{1}{c|}{\textbf{Time(s)}} & \multicolumn{1}{c|}{\textbf{Obj. Value}} & \multicolumn{1}{c|}{\textbf{Time(s)}} & \multicolumn{1}{c|}{\textbf{Obj. Value}} & \multicolumn{1}{c|}{\textbf{Time(s)}} & \multicolumn{1}{c|}{\textbf{Obj. Value}} & \multicolumn{1}{c|}{\textbf{Time(s)}} \\ \hline
		$\textbf{7}$   & $720000$ & $5.5261	$ & $720000$  & $6.14
		$  & $720000$    & $78.99$  & $\text{Infeasible}$                                      & $45.37
		$                               \\ \hline
		$\textbf{15}$   & $1799900$    & $5.87$  & $1800000	$     & $20.08$                              & $\text{Non-Convergence}$      & $37.27$                              & $\text{Infeasible}$& $1151.05
		$                   \\ \hline
		$\textbf{31}$  & $4319500$ & $33.81$ & $4318700$ & $66.68$                            & $\text{Non-Convergence}$ & $179.21$   &$\text{Non-Convergence}$                                          & $32293.37$                            \\ \hline
		$\textbf{67}$  & $7197900$ & $115.17$  & $6531200$  & $399.39$    & $\text{Non-Convergence}$                                         & $1037$                            & $\text{Non-Convergence}$                                          & $>40000$                              \\ \hline
	\end{tabular}
		\vspace{-0.15in}
\end{table*}

\begin{table*}[]
	\centering
	\caption{Detailed Results for Our Proposed Algorithm}
	\vspace{-0.1in}
	\begin{tabular}{|l|l|l|l|l|l|l|l|l|l|}
		\hline
		\multirow{2}{*}{\textbf{Node}} & \multicolumn{3}{c|}{\textbf{Time}}                                                                            & \multicolumn{3}{c|}{\textbf{Objective Value}}                                                                      & \multicolumn{3}{c|}{\textbf{Iterations}}                                                                      \\ \cline{2-10} 
		& \multicolumn{1}{c|}{\textbf{Max.}} & \multicolumn{1}{c|}{\textbf{Min.}} & \multicolumn{1}{c|}{\textbf{Average}} & \multicolumn{1}{c|}{\textbf{Max.}} & \multicolumn{1}{c|}{\textbf{Min.}} & \multicolumn{1}{c|}{\textbf{Average}} & \multicolumn{1}{c|}{\textbf{Max.}} & \multicolumn{1}{c|}{\textbf{Min.}} & \multicolumn{1}{c|}{\textbf{Average}} \\ \hline
		\textbf{7}  & $5.87$   & $5.34$  & $5.52 $   & $720000$ & $720000$  & $720000$ & $3 $   & $3 $ & $3 $  \\ \hline
		\textbf{15}                    & $6.39$  & $5.50$  & $5.87$ & $1799900$   & $1799900$                      & $1799900$   & $2$ & $2$ & $2$ \\ \hline
		\textbf{31} & $36.44$  & $32.94$ & $33.81$  & $4319500$ & $4319500$ &$4319500 $   & $2$ & $2$ & $2$ \\ \hline
		\textbf{67}                    & $120.92$   & $108.85$  & $115.17$  & $7197900$  & $7197900$ & $7197900$  & $2$ & $2$ & $2$ \\ \hline
	\end{tabular}
	\label{tab:ourdetailshet}
	\vspace{-0.10in}
\end{table*}
\begin{table*}[ht]
	\centering
	\caption{Robustness of GA Algorithm}
	\vspace{-0.1in}
	\label{tab:garesultshet}
	\begin{tabular}{|l|l|l|l|l|l|l|l|}
		\hline
		\multirow{2}{*}{\textbf{Node}} & \multicolumn{3}{c|}{\textbf{Time (s)}} & \multicolumn{3}{c|}{\textbf{Objective Value}} & \multicolumn{1}{c|}{\multirow{2}{*}{\textbf{Convergence (\%)}}} \\ \cline{2-7}
		& \textbf{Max.}         & \textbf{Min.}         & \textbf{Average}    & \textbf{Max.}         & \textbf{Min.}         & \textbf{Average}     & \multicolumn{1}{c|}{}                                           \\ \hline
		\textbf{7}                     & $7.26$          & $5.09$   & $6.14$ & $720000$  & $720000$ & $720000$  & $100$   \\ \hline
		\textbf{15}                    & $30.13$         & $15.54 $        & $20.08 $           &$ 1800000$       & $1799800$       & $1800000  $        & $60 $                       \\ \hline
		\textbf{31}                    & $100.93$        & $48.02$         & $66.68  $          & $4319000$       & $4318300$       & $4318700$          & $70$                        \\ \hline
		\textbf{67}                    & $2363.60$       & $23.71$         & $399.39$           & $7194400$       & $4758800$       & $6531200$          & $40$                        \\ \hline
	\end{tabular}
	\vspace{-0.10in}
\end{table*}

\begin{figure}
	\centering
	\includegraphics[width=1\linewidth]{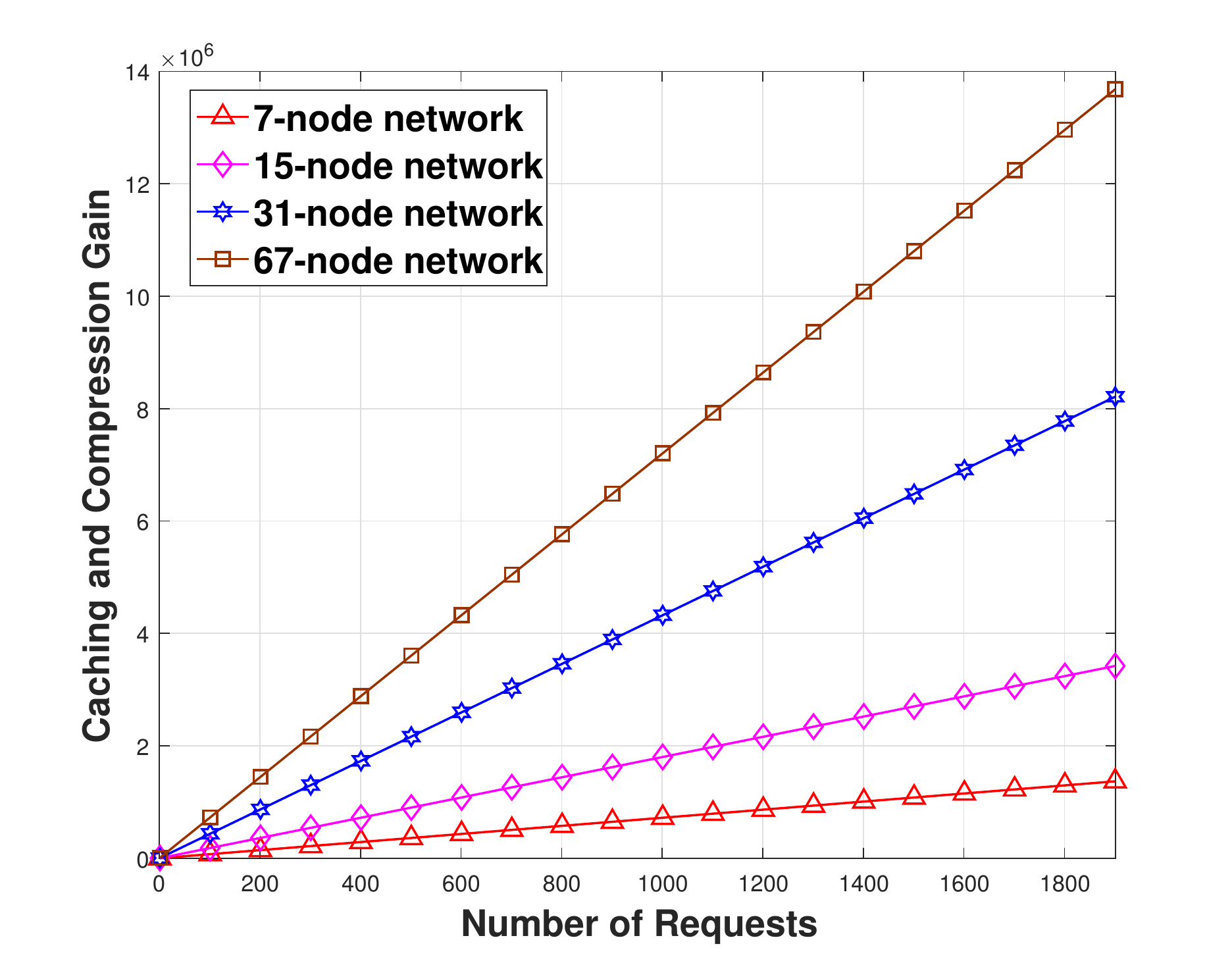}
	\vspace{-0.2in}
	\caption{Impact of number of requests on the performance for the heterogeneous network.}
	\label{fig:objvsr2}
	\vspace{-0.3in}
\end{figure}

\section{Conclusion}\label{sec:con}

We considered the problem of optimally compressing and caching data across a communication network.  In order to achieve this goal, we formulated an optimization problem to minimize the total latency, which is equivalent to maximizing compression and caching gain, under energy constraint. This problem is NP-hard in general. We then proposed an efficient approximation algorithm that can achieve a $(1-1/e)$ approximation solution to the optimum in strongly polynomial time. Finally, we evaluated the performance of our proposed algorithm through extensive synthetic simulations, and made a comparison with benchmarks. We observed that our proposed algorithm can achieve near-optimal solution and outperform the benchmarks.

\vspace{-0.1in}
\section*{Acknowledgments}\label{sec:ack}
This work was supported by the U.S. Army Research Laboratory and the U.K. Ministry of Defence under Agreement Number W911NF-16-3-0001.  The views and conclusions contained in this document are those of the authors and should not be interpreted as representing the official policies, either expressed or implied, of the U.S. Army Research Laboratory, the U.S. Government, the U.K. Ministry of Defence or the U.K. Government. The U.S. and U.K. Governments are authorized to reproduce and distribute reprints for Government purposes notwithstanding any copy-right notation hereon. 
Faheem Zafari also acknowledges the financial support by EPSRC Centre for Doctoral Training in High Performance Embedded and Distributed Systems  (HiPEDS, Grant Reference EP/L016796/1), and Department of Electrical and Electronics Engineering, Imperial College London.

\vspace{-0.1in}
\bibliographystyle{ACM-Reference-Format}
\bibliography{refs} 

\section{Appendix}\label{sec:app}
Equation~\eqref{eq:servingcost} captures one-time\footnote{During every time period  $T$,  data is always pushed towards the sink upon the first request.} energy cost of receiving, compressing and transmitting data $y_k$ from leaf node (level $h(k)$) to the sink node (level $0$).  The amount of data received by any node at level $i$ from leaf node $k$ is $y_k\prod_{m=i+1}^{h(k)}\delta_{k,m}$ due to the compression from level $h(k)$ to $i+1.$  The term $f(\delta_{k,i})$ captures the reception, transmission and compression energy cost for node at level $i$ along the path from leaf node $k$ to the sink node.  
 
 For Equation~\eqref{eq:cachretrievecost}, note that the remaining $(R_k-1)$ requests are either served by the leaf node or a cached copy of data $y_k$ at level $i$ for $i=1,\cdots, h(k)$ \cite{faheemjian17}.  W.l.o.g., we consider node $v_{k, i}$  at level $i$.  If data $y_k$ is not cached from $v_{k, i}$ up to the sink node (level $0)$, i.e., $b_{k, j}=0$ for $j=0, \cdots, i,$  the cost is {incurred} due to  receiving, transmitting and compressing the data $(R_k-1)$ times, which is captured by the first term in Equation~\eqref{eq:cachretrievecost}, the second term is $0$.  Otherwise, the $(R_k-1)$ requests are served by the cached copy at $v_{k, i}$,  the corresponding caching and transmission cost serving from $v_{k, i}$ are captured by the second term in Equation~\eqref{eq:cachretrievecost}, and the corresponding reception, transmission and compression cost from $v_{k, i-1}$ up to to sink node is captured by the first term.  Note that the first time cost of reception, transmission and compression the data from leaf node to $v_{k, i}$ is {already} captured by Equation~\eqref{eq:servingcost}.  
 
{For sake of completeness, we restate a simple but illustrative example from \cite{faheemjian17} to explain the above equations.}
\begin{example}\label{exm1}
	{We consider a network with one leaf node and one sink node, i.e., $k=1$ and $h(k)=1.$ }

	{Then the cost in Equation~(\ref{eq:servingcost}) becomes $E_1^C=y_1f(\delta_{1, 0})\delta_{1,1}+y_1f(\delta_{1,1}),$ where the first and second terms capture the reception, transmission and compression cost for data $y_1$ at sink node and the leaf node, respectively. }
	
	{The cost in Equation~(\ref{eq:cachretrievecost}) is $E_1^R=$
		\begin{small}
			\begin{align*}
			&\underbrace{y_1(R_1-1)\left[f(\delta_{1, 0})\delta_{1,1}(1-b_{1,0})+\delta_{1,0} \delta_{1,1}b_{1,0}\left(\frac{w_{ca}T}{R_1-1}+\varepsilon_{1T}\right)\right]}_{\text{Term $1$}}\nonumber\displaybreak[0]\\
			&{+\underbrace{y_1(R_1-1)\left[f(\delta_{1, 1})(1-b_{1,0}-b_{1,1})+ \delta_{1,1}b_{1,1}\left(\frac{w_{ca}T}{R_1-1}+\varepsilon_{1T}\right)\right]}_{\text{Term $2$}}},
			\end{align*}
		\end{small}
		where $\text{Term $1$}$ and $\text{Term $2$}$ capture the costs at sink node and leaf node, respectively.   To be more specific, there are three cases: (i) data $y_1$ is cached at sink node $0$, i.e., $b_{1, 0}=1$ and $b_{1, 1}=0$ (since we only cache one copy);  (ii) data $y_1$ is cached at leaf node $1$, i.e., $b_{1, 0}=0$ and $b_{1, 1}=1$; and (iii) data $y_1$ is not cached, i.e., $b_{1, 0}=b_{1, 1}=0$.  We consider these three cases in the following. }
	
	{Case (i), i.e., $b_{1, 0}=1$ and $b_{1, 1}=0$,  $\text{Term $2$}$ becomes $0$ and $\text{Term $1$}$ reduces to $y_1(R_1-1)\delta_{1,0} \delta_{1,1}b_{1,0}(\frac{w_{ca}T}{R_1-1} +\varepsilon_{1T})$ since all the $(R_1-1)$ requests are served from sink node. {This indicates that the total energy cost is due to caching the data for time period $T$ and transmitting it $(R_k-1)$ times from the sink node to users that request it. }} 
	
	{Case (ii), i.e., $b_{1, 0}=0$ and $b_{1, 1}=1$,  $\text{Term $1$}$ becomes $y_1(R_1-1)f(\delta_{1, 0})\delta_{1,1}$, which captures the reception, transmission and compression costs at sink node $0$ for serving the $(R_1-1)$ requests.  $\text{Term $2$}$ becomes $y_1(R_1-1) \delta_{1,1}b_{1,1}\left(\frac{w_{ca}T}{R_1-1}+\varepsilon_{1T}\right)$, which captures the cost {of caching data at the leaf node and transmitting the data $(R_k-1)$ times from the cached copy to the sink node} .  The sum of them is the total cost to serve $(R_1-1)$ requests. }
	
	{Case (iii), i.e., $b_{1, 0}=b_{1, 1}=0$, $E_1^R=y_1(R_1-1)f(\delta_{1, 0})\delta_{1,1}+y_1(R_1-1)f(\delta_{1, 1})$, which captures the reception, transmission and compression costs at sink node $0$ and leaf node $1$ for serving the $(R_1-1)$ requests since there is no cached copy in the network.  }
\end{example}

\end{document}